\newtheorem{theorem}{\textbf{Theorem}}
\newtheorem{proposition}{\textbf{Proposition}}
\newtheorem{assumption}{Assumption}
\begin{document}
\bibliographystyle{IEEEtran}

\title{CDDM: Channel Denoising Diffusion Models for Wireless Semantic Communications}

\author{Tong Wu, Zhiyong Chen, \emph{Senior Member, IEEE}, Dazhi He, \emph{Member, IEEE}, Liang Qian, Yin Xu, \emph{Member, IEEE}, Meixia Tao, \emph{Fellow, IEEE}, and Wenjun Zhang, \emph{Fellow, IEEE}
\thanks{ The paper will be presented in part at IEEE GLOBECOM 2023 \cite{gc-cddm}.}

\thanks{The authors are with the Cooperative Medianet Innovation Center (CMIC), Shanghai Jiao Tong University, Shanghai 200240, China, and Shanghai Key Laboratory of Digital Media Processing and Transmission (e-mail: \{wu\_tong, zhiyongchen, hedazhi, lqian, xuyin, mxtao, zhangwenjun\}@sjtu.edu.cn). L. Qian, M. Tao and W. Zhang are also with the Department of Electronic Engineering, Shanghai Jiao Tong University, Shanghai 200240, China.}}
		
%\author{
%\authorblockN{XXX} \\
%\authorblockA{Shanghai Jiao Tong University, China} \\
%Email: xxx@sjtu.edu.cn \\
%Tel: +86 - 10086}
\maketitle

%%%%%%%%%%%%%%%%%%%%%%%%%%%%%%%%%%%%%%%
\begin{abstract}
Diffusion models (DM) can gradually learn to remove noise, which have been widely used in artificial intelligence generated content (AIGC) in recent years. The property of DM for eliminating noise leads us to wonder whether DM can be applied to wireless communications to help the receiver mitigate the channel noise. To address this, we propose channel denoising diffusion models (CDDM) for semantic communications over wireless channels in this paper. CDDM can be applied as a new physical layer module after the channel equalization to learn the distribution of the channel input signal, and then utilizes this learned knowledge to remove the channel noise. We derive corresponding training and sampling algorithms of CDDM according to the forward diffusion process specially designed to adapt the channel models and theoretically  prove that the well-trained CDDM can effectively reduce the conditional entropy of the received signal under small sampling steps. Moreover, we apply CDDM to a semantic communications system based on joint source-channel coding (JSCC) for image transmission. Extensive experimental results demonstrate that CDDM can further reduce the mean square error (MSE) after minimum mean square error (MMSE) equalizer, and the joint CDDM and JSCC system achieves better performance than the JSCC system and the traditional JPEG2000 with low-density parity-check (LDPC) code approach.
\end{abstract}

\begin{IEEEkeywords}
Diffusion models, wireless image transmission, semantic communications, joint source-channel coding.
\end{IEEEkeywords}

\section{Introduction}
Diffusion models (DM)\cite{DDPM2015, Ho,Song} have recently achieved unprecedented success in artificial intelligence generated content (AIGC) \cite{Yang2022DiffusionMA}, including multimodal image generation and edition \cite{meng,Choi}, text, and video generation \cite{Lin,Sihyun}. DM is a class of latent variable models inspired by non-equilibrium thermodynamics. They directly model the score function of the likelihood function through variational lower bounds, resulting in advanced generative performance. Compared to previous generative models such as variational auto-encoder (VAE) \cite{VAE}, generative adversarial network (GAN) \cite{GAN}, and normalization flow (NF) \cite{NF}, DM can learn fine-grained knowledge of the distribution, allowing it to generate contents with rich details. Additionally, diffusion models are capable of generating more diverse images and have been shown to be resistant to mode collapse. The emergence of implicit classifiers endows diffusion models with flexibility controllability, enhanced efficiency and ensuring faithful generation in conditional generation tasks.

More specifically, DM gradually adds Gaussian noise to the available training data in the forward diffusion process until the data becomes pure noise. Then, in the reverse sampling process, it learns to recover the data from the noise, as shown in Fig. \ref{Diff}. Generally, given a data distribution  $\mathbf{x}_0\sim q(\mathbf{x}_0)$, the forward diffusion process generates the $t$-th sample of $\mathbf{x}_t$ by sampling a Gaussian vector $\epsilon \sim \mathcal{N}(0,\mathbf{I})$ as following
%DM这里具体介绍他的优点，最后具体而言。。。。
\begin{align}\label{DDPMforward}
  \mathbf{x}_t=\sqrt{\bar{\alpha}_t}\mathbf{x}_0+\sqrt{1-\bar{\alpha}_t}\epsilon,
\end{align}
where $\bar{\alpha}_t= {\textstyle \prod_{i=1}^{t}}\alpha_i $ and $\alpha_i\in(0,1)$ are hyperparameters.

In wireless communications, it is well known that the received signal $y$ is a noisy and distorted version of the transmitted signal $x$, e.g., we have the following for the additive white Gaussian noise (AWGN) channel
\begin{align}
  y=x+n,
\end{align}
where $n$ is white Gaussian noise.

Interestingly, compared to (1) and (2), we can find that the designed process of DM and the wireless communications system are similar. DM progressively learns to effectively remove noise, thereby generating data that closely resembles the original distribution, while the receiver in the wireless communications system aims to recover the transmitted signal from the received signal. Clearly, \textbf{can DM be applied to the wireless communications system to help the receiver remove noise?} %To the best of our knowledge, there have been no related works in the literature that address this question.
\begin{figure}[t]
  \begin{center}
    \includegraphics*[width=8.8cm]{./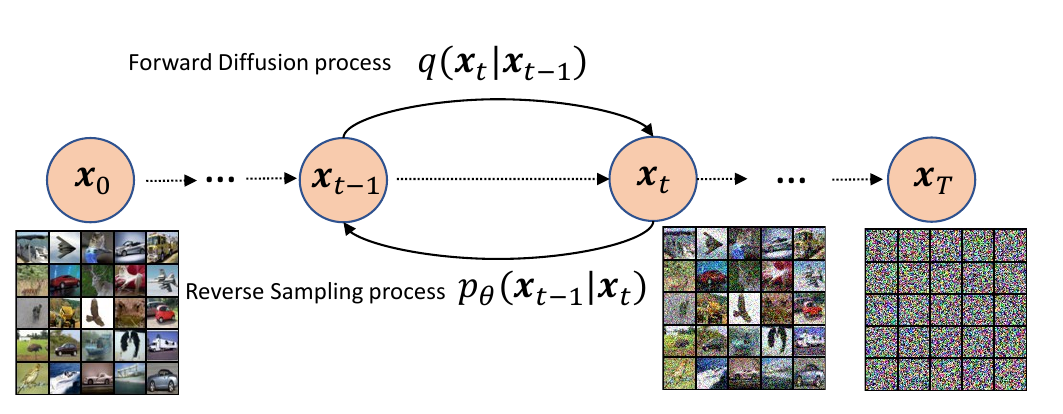}
  \end{center}
    \caption{{The forward diffusion process with transition kernel $q(\mathbf{x}_t|\mathbf{x}_{t-1})$ and the reverse sampling process with learnable transition kernel $p_\theta(\mathbf{x}_{t-1}|\mathbf{x}_t)$  of diffusion model in \cite{Ho}.}}
    \label{Diff}
\end{figure}

Motivated by this, in this paper, we design channel denoising diffusion models (CDDM) for semantic communications in the wireless communications system. The proposed CDDM is conditioned on the received signal and channel estimation results to eliminate channel noise. In contrast to conventional generative models that only generate data adhering to the original data distribution, CDDM directly generates data that closely resembles the transmitted signal $\mathbf{x}$, consequently enhancing the performance of the communications system. By employing carefully designed forward diffusion and reverse sampling processes based on an explicit conditional probabilistic model of the received signal, CDDM can adapt to different channel conditions, such as AWGN channel and Rayleigh fading channel with different signal-to-noise ratios (SNR). To leverag the received signal, we starting the reverse sampling process from the received signal rather than pure noise, greatly reducing the number of reverse sampling steps and thus accelerating the process.

In contrast to the extensive research on DM in AIGC, there have been few works of DM in wireless communications so far. In \cite{kim}, DM is employed to generate the wireless channel for an end-to-end communications system, achieving almost the same performance as the channel-aware case. In \cite{Yoni}, DM with an adapted diffusion process is proposed for the decoding of algebraic block codes. Additionally, \cite{Eleonora} applies DM as the semantic decoder to generate the image condition on the transmitted semantic segment labels of the original image, achieving excellent mean intersection over union (mIoU) and learned perceptual image patch similarity (LPIPS) performance.

On the other hand, semantic communications \cite{Lan,jihong} have emerged as a novel paradigm that facilitates the seamless integration of information and communication technology with artificial intelligence (AI), which have been recognized as a highly promising solution for the sixth-generation (6G) wireless networks \cite{shen2}. Semantic communications emphasize the transmission of valuable semantic information rather than bits, thereby guaranteeing improved transmission efficiency and reliability. One fundamental concept behind semantic communications is to bridge the source and channel components of Shannon theory \cite{Shannon}, thereby enhancing the overall performance of end-to-end transmission. The paradigm focusing on the integrated design of source and channel coding processing is known as joint source-channel coding (JSCC), which is a classical subject in the coding theory and information theory \cite{Fresia,Guyader,Chen}. However, traditional JSCC techniques are predominantly rooted in complex and explicit probabilistic models, heavily relying on expert manual designs which often face challenges when dealing with complex sources. Moreover, these JSCC techniques overlook semantic aspects and lack optimization for specific tasks or human visual perception. 

%Semantic communications integrate advanced deep learning techniques to achieve JSCC, enabling efficient optimization for various complex source-channel conditions, semantic information, and human perception. 
%For wireless image transmission, the deep-learning based JSCC have shown great performance surpassing classic separation-based JPEG2000 source coding and advanced low-density parity-check (LDPC) channel coding, especially for small size images and under human visual perception evaluation matric such as muti-scale structure similarity index measure (MSSSIM) \cite{MSSSIM}. 
Many previous studies investigate deep-learning based JSCC techniques for semantic communications \cite{gundu2019, chenwei,XuTung,Dai2,KeYang,Bo}. Most studies concentrate on designing specific frameworks for different data modals and have achieved better performance compared with traditional wireless transmission schemes. For wireless image transmission, \cite{chenwei} proposes a novel JSCC method based on attention mechanisms, which can automatically adapt to various channel conditions. In \cite{Dai2}, an entropy model is proposed to achieve adaptive rate control for deep learning based JSCC architecture for semantic communications. In \cite{KeYang}, the swin transformer \cite{Ze} is integrated into the deep JSCC framework to improve the performance of wireless image transmission. \cite{Bo} develops a joint coding-modulation method and achieves end-to-end digital semantic communication system for image transmission, outperforming the analog-based JSCC system at low SNRs. Generally, the deep-learning based JSCC have shown great performance surpassing classic separation-based JPEG2000 source coding and advanced low-density parity-check (LDPC) channel coding, especially for small size images and under human visual perception evaluation matric such as muti-scale structure similarity index measure (MSSSIM) \cite{MSSSIM}. 
% In summary, there have been few publications in literature regarding the joint design of DMs and JSCC over wireless communications.

Despite its great potential, previous studies predominantly concentrate on the development of a more sophisticated model architecture with increased capacity to enhance overall performance. The channel distortion is handled through direct end-to-end optimization. In this case, the JSCC models solely learn coding and decoding strategies by utilizing received signal samples, combating channel interference. To more effectively mitigate channel interference, we integrate the CDDM with the JSCC-based semantic communications system for wireless image transmission, where the signal after CDDM is fed into the JSCC decoder to recover the image. As previously discussed, our CDDM is specially developed to mitigate channel distortion by eliminating channel noise based on an explicit probability of the received signal, thereby improving the performance of the JSCC-based semantic communication system.

\begin{figure*}[t]
  \begin{center}
    \includegraphics*[width=18cm]{./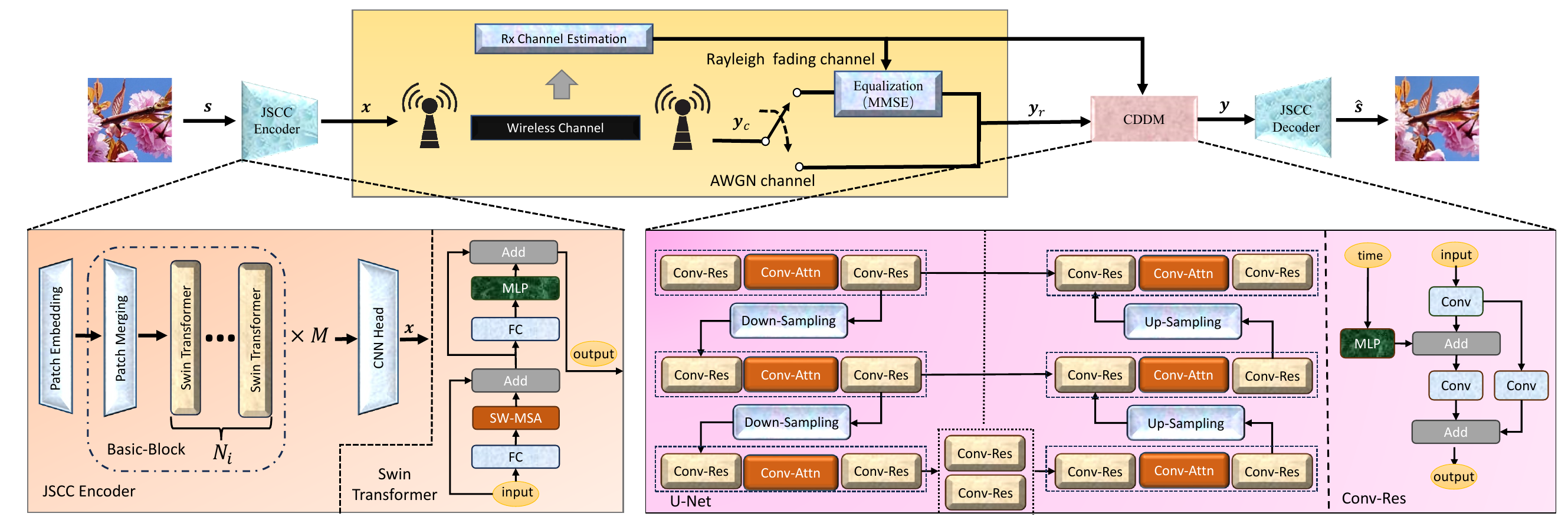}
  \end{center}
    \caption{{Architecture of the joint CDDM and JSCC system.}}
    \label{CDESC}
\end{figure*}

The contributions of this paper can be summarized as follows.
\begin{itemize}
  \item We design a CDDM module based on the U-Net framework in wireless communications, which lies after the channel equalization (or without channel equalization) over the Rayleigh channel (or AWGN channel). The CDDM module learns the distribution of the channel input signal to predict the channel noise and remove it. The model is trained through the forward diffusion process specially designed to adapt the channel models, requiring no knowledge of the current channel state. After training, the CDDM addresses the received signal after equalization with the corresponding sampling algorithm, succeeding in eliminating the channel noise.
  \item We derive the explicit condition probability of the received signal after equalization according to the channel mathmatical model and the equalization algorithm, which instructs us to design the corresponding forward diffusion process to match the conditional distribution. The training of the proposed CDDM is accomplished by maximizing the variational lower bound of the logarithm maximun likelihood function, which is relaxed by introducing a series of latent variables in the forward diffusion process. Furthermore, we decompose the variational lower bound into multiple components associated with the latent variables and derive the final loss function using re-parameterization and re-weighted techniques to optimize these components respectively. By utilizing the Bayesian conditional posterior probability, we obtain a sampling algorithm that successfully and effectively mitigates the channel noise. 
  \item We derive the sufficient condition for the reverse sampling algorithm reducing the conditional entropy of the received signal. Through Monte Carlo experiments, we discover the magnitude of the reduction in the upper bound of the conditional entropy differs from various sampling steps, providing insights for selecting the maximum sampling steps.
  \item We apply the CDDM to a semantic communications system based on the JSCC technique for wireless image transmission, called the joint CDDM and JSCC system. Experiments on the mean square error (MSE) between the transmitted signal and the received signal after CDDM prove that compared to the system without CDDM, the system with CDDM has a smaller MSE performance for both Rayleigh fading channel and AWGN channel, indicating that the proposed CDDM can effectively reduce the impact of channel noise through learning. Finally, extensive experimental results on different datasets demonstrate that the joint CDDM and JSCC system outperforms both the JSCC system and the traditional JPEG2000 with LDPC codec system under both AWGN and Rayleigh fading channels in terms of the peak signal-to-noise ratio (PSNR) and MSSSIM. We also evaluate its inherent robustness to channel estimation errors and its adaptability to various SNRs.
%, highighting the necessity of designing a framework to enhance its robustness at high SNRs.
\end{itemize}

The rest of this paper is organized as follows. The system model is introduced in Section II. The detail of the proposed CDDM is presented in Section III. The joint CDDM and JSCC system for semantic communications is introduced in Section IV. Finally, extensive experimental results are presented in Section V, and conclusions are drawn in Section VI.
%这句话是contributon的贡献
%We design the forward diffusion process based on the conditional distribution of the received signal after channel equalization (or without channel equalization) under Rayleigh fading channel (or AWGN channel). We design the corresponding training algorithm that solely relies on the forward diffusion process without any requirement of the received signal. The forward diffusion process also prompts us to design a sampling algorithm to achieve channel noise elimination.
%这里可以加一句加速采样的贡献
%模块时怎么运行的，通过什么方式实现了哪些优点
%meanwhile加语义通信介绍
%充分利用语义通信和DM的优点，我们III干了什么，这些有什么优点
%contribution：我们做了什么事情，取得了什么效果
%文章结构

%Furthermore, we apply the CDDM to a semantic communications system based on joint source-channel coding (JSCC) technique for wireless image transmission, where the signal after CDDM is fed into the JSCC decoder to recover the image. We test the mean square error (MSE) between the transmitted signal and the received signal after CDDM, and find that compared to the system without CDDM, the system with CDDM has smaller MSE performance both for Rayleigh fading channel and AWGN channel. This fact indicates that the proposed CDDM can effectively reduce the impact of channel noise through learning. The experimental results show that the joint CDDM and JSCC method outperforms both the JSCC method and the traditional JPEG2000 with low-density parity-check (LDPC) code approach in terms of the peak signal-to-noise ratio (PSNR) of the images.

\section{System model}\label{I}
In this section, we describe the system which the proposed CDDM is employed after the channel equalization as shown in Fig. \ref{CDESC}. CDDM is trained using a specialized noise schedule adapted to the wireless channel, which enables it to effectively eliminate channel noise through sampling algorithm.
%\subsection{Conditional Distribution of the received signals}

Let $\mathbf{x}\in \mathbb{R}^{2k}$ be the real-valued symbols. Here, $k$ is the number of channel uses. $\mathbf{x_c}\in\mathbb{C}^{k}$ is the complex-valued symbols which can be transmitted through the wireless channel, and the $i$-th transmitted symbol of $\mathbf{x_c}$ can be expressed as ${x_{c,i}}={x_i}+jx_{i+k}$, for $i=1,...,k.$

Thus, the $i$-th received symbol of the received signal $\mathbf{y_c}$ is
\begin{align}\label{receive signal}
y_{c,i}=h_{c,i}x_{c,i}+n_{c,i},
\end{align}
where $h_{c,i}\sim \mathbb{CN}(0,1)$ are independent and identically distributed (i.i.d.) Rayleigh fading gains, $\mathbf{x_{c}}$ has a power constraint $\mathbb{E}[||\mathbf{x_{c}}||^2_2]\leq 1$, and $n_{c,i}\sim \mathbb{CN} (0,2\sigma^2)$ are i.i.d. AWGN samples.

%where $h_{c,i},n_{c,i}$ are the $i$-th elements of $\mathbf{h_c},\mathbf{n_c}$, $\mathbf{h_c}=\frac{1}{\sqrt{2}}(\mathbf{h}_1+j\mathbf{h}_2)$, $\mathbf{h}_1,\mathbf{h}_2 \sim \mathcal{N}(0,\mathbf{I}_k)$ for Rayleigh fading channel and $\mathbf{h}_1=\mathbf{h}_2=\mathbf{1}$ for AWGN channel. $\mathbf{n_c}=\frac{\sigma}{\sqrt{2}}(\mathbf{\epsilon}_1+j\mathbf{\epsilon}_2)$, $\mathbf{\epsilon}_1,\mathbf{\epsilon}_2 \sim \mathcal{N}(0,\mathbf{I}_k)$. Noise power is $\sigma^2$.

$\mathbf{y_c}$ is then addressed by equalization as $\mathbf{y_{eq}}\in \mathbb{C}^{k}$, following a normalization-reshape module outputing a real vector $\mathbf{y_r}\in\mathbb{R}^{2k}$. We consider that the receiver can obtain the channel state $\mathbf{h_c}=[h_{c,1},...,h_{c,k}]$ through channel estimation and in this paper, we apply minimum mean square error (MMSE) as the equalizer. Therefore, we can derive the conditional distribution of $\mathbf{y_r}$ with known $\mathbf{x}$ and $\mathbf{h_c}$, which can be formulated to instruct the forward diffusion and reverse sampling processes of CDDM.

\begin{proposition}\label{thm1}
With MMSE, the conditional distribution of $\mathbf{y_r}$ with known $\mathbf{x}$ and $\mathbf{h_c}$ under Rayleigh fading channel is
  \begin{gather}\label{zfyr}
    p(\mathbf{y_r}|\mathbf{x},\mathbf{h_c})\sim \mathcal{N}(\mathbf{y_r};\frac{1}{\sqrt{1+\sigma ^{2}}}\mathbf{W_s}\mathbf{x},\frac{\sigma ^{2}}{{1+\sigma ^{2}}}\mathbf{W}^2_{\mathbf{n}}),
  \end{gather}
where $\mathbf{H_r}=diag({\mathbf{h_r}})$, $\mathbf{h}_{\mathbf{r}}=\begin{bmatrix}|\mathbf{h_c}|\\|\mathbf{h_c}|\end{bmatrix}\in \mathbb{R}^{2k}$, and
 \begin{gather}\label{WsMMSE}
   \mathbf{W_s}=\mathbf{H}^2_{\mathbf{r}}(\mathbf{H}^2_\mathbf{r}+2\sigma^2\mathbf{I})^{-1},\mathbf{W_n}=\mathbf{H_r}(\mathbf{H}^2_\mathbf{r}+2\sigma^2\mathbf{I})^{-1}.
 \end{gather}
\end{proposition}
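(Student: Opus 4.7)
The plan is to follow the signal path through MMSE equalization and the normalization-reshape module, tracking the conditional Gaussian law of $\mathbf{y_r}$ at each stage. Since $\mathbf{H_c}=\mathrm{diag}(\mathbf{h_c})$, the channel in (3) decouples into $k$ scalar Rayleigh sub-channels, so I would first write the per-symbol MMSE output $y_{eq,i}=h_{c,i}^{\ast}y_{c,i}/(|h_{c,i}|^{2}+2\sigma^{2})$ (assuming the unit per-symbol power that the stated form of $\mathbf{W_s}$ and $\mathbf{W_n}$ in (5) implicitly fixes), and split it into a residual-signal part with real gain $|h_{c,i}|^{2}/(|h_{c,i}|^{2}+2\sigma^{2})$ and a filtered-noise part $h_{c,i}^{\ast}n_{c,i}/(|h_{c,i}|^{2}+2\sigma^{2})$.

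Next I would apply the normalization-reshape step, which the proposition implicitly pins down: scale by $1/\sqrt{1+\sigma^{2}}$ and stack $\mathbf{y_r}=[\mathrm{Re}(\mathbf{y_{eq}});\mathrm{Im}(\mathbf{y_{eq}})]/\sqrt{1+\sigma^{2}}$. This particular normalization constant is exactly what makes $\mathbf{y_r}$ fit the variance-preserving form of a DDPM forward sample and is the design motivation behind (4). Because the per-symbol signal gain is real, it acts identically on $\mathrm{Re}(x_{c,i})=x_i$ and $\mathrm{Im}(x_{c,i})=x_{i+k}$; combined with the fact that $\mathbf{h_r}$ repeats $|\mathbf{h_c}|$ twice so that $\mathbf{H_r}^{2}$ has matching blocks, the signal-mean of $\mathbf{y_r}$ assembles into exactly $\tfrac{1}{\sqrt{1+\sigma^{2}}}\mathbf{W_s}\mathbf{x}$ with $\mathbf{W_s}$ as in (5).

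For the noise covariance, conditioned on $\mathbf{h_c}$ the term $h_{c,i}^{\ast}n_{c,i}$ is still a zero-mean circularly-symmetric complex Gaussian, now with variance $|h_{c,i}|^{2}\cdot 2\sigma^{2}$, since multiplying a CSCG by a deterministic complex scalar only rescales its variance by the squared magnitude. After the $(|h_{c,i}|^{2}+2\sigma^{2})^{-1}$ denominator and the overall $1/\sqrt{1+\sigma^{2}}$ scaling, the complex noise per sub-channel is $\mathbb{CN}(0,\tfrac{2\sigma^{2}}{1+\sigma^{2}}(W_n)_{ii}^{2})$, whose real and imaginary parts are independent $\mathcal{N}(0,\tfrac{\sigma^{2}}{1+\sigma^{2}}(W_n)_{ii}^{2})$. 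Independence across $i$ is inherited from $\{n_{c,i}\}$ being i.i.d., so the full $2k\times 2k$ noise covariance matrix is diagonal and equal to $\tfrac{\sigma^{2}}{1+\sigma^{2}}\mathbf{W_n}^{2}$, which, combined with the signal mean above, yields (4).

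The main subtlety is just bookkeeping: one has to verify that the complex-to-real reshape and the repetition structure of $\mathbf{h_r}$ align the real and imaginary entries of $\mathbf{y_{eq}}$ with the correct diagonal entries of $\mathbf{W_s}$ and $\mathbf{W_n}$, and that the CSCG invariance under multiplication by $h_{c,i}^{\ast}$ is invoked under the conditioning on $\mathbf{h_c}$, so that $h_{c,i}^{\ast}$ is treated as a constant rather than a random variable. Once these correspondences are set up, the rest of the derivation is a direct variance computation.
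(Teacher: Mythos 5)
Your proposal is correct and follows essentially the same route as the paper's own proof: per-symbol decomposition of the MMSE output into a real signal gain $|h_{c,i}|^{2}/(|h_{c,i}|^{2}+2\sigma^{2})$ and a filtered noise term $h_{c,i}^{*}n_{c,i}/(|h_{c,i}|^{2}+2\sigma^{2})$, whose real and imaginary parts conditioned on $\mathbf{h_c}$ are each $\mathcal{N}(0,\sigma^{2}W_{n,i}^{2})$, followed by the $1/\sqrt{1+\sigma^{2}}$ normalization and the real reshape. Your explicit treatment of the circular-symmetry invariance under multiplication by $h_{c,i}^{*}$ and of the independence across sub-channels only makes precise what the paper invokes implicitly as the ``resampling trick.''
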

\begin{proof}
Based on the defination, $\mathbf{W_s}$ and $\mathbf{W_n}$ are diagonal matrix, where the $i$-th and ($i+k$)-th diagonal element are
  \begin{align}\label{Wselement}
    {W_{s,i}}= {W_{s,i+k}}= \frac{|h_{c,i}|^2}{|h_{c,i}|^2+2\sigma^2}\nonumber,\\
    {W_{n,i}}= {W_{n,i+k}}= \frac{|h_{c,i}|}{|h_{c,i}|^2+2\sigma^2}.
  \end{align}

The $i$-th output of MMSE ${y_{eq,i}}$ can be expressed as
  \begin{align}\label{zfeq}
    {y_{eq,i}}=\frac{|h_{c,i}|^2x_{c,i}+h_{c,i}^Hn_{c,i}}{|h_{c,i}|^2+2\sigma^2}.
  \end{align}

Based on (\ref{Wselement}), we have
  \begin{align}
    \frac{|h_{c,i}|^2x_{c,i}}{|h_{c,i}|^2+2\sigma^2}={W_{s,i}}x_{c,i}.
  \end{align}

With the resampling trick, the conditional distributions of real part and imaginary part of $\frac{h_{c,i}^Hn_{c,i}}{|h_{c,i}|^2+2\sigma^2}$ are
  \begin{align}\label{renoise}
    p(Re(\frac{h_{c,i}^Hn_{c,i}}{|h_{c,i}|^2+2\sigma^2})|h_{c,i}) &\sim\mathcal{N}(0,{\sigma^{2}}(\frac{|h_{c,i}|}{|h_{c,i}|^2+2\sigma^{2}})^2)\nonumber\\
&=\mathcal{N}(0,{\sigma^{2}}W_{n,i}^2),
  \end{align}
  \begin{align}\label{imnoise}
    %p(\sigma \frac{\epsilon_{2,i}*h_{1,i}-\epsilon_{1,i}*h_{2,i}}{2|h_{c,i}|^2}|h_{c,i})
    p(Im(\frac{h_{c,i}^Hn_{c,i}}{|h_{c,i}|^2+2\sigma^2})|h_{c,i})\sim \mathcal{N}(0,{\sigma^{2}}W_{n,i}^2).
  \end{align}
 
Accordingly, we can rewrite $\mathbf{y_r}$ as
  \begin{align}
    \mathbf{y_r}=\frac{1}{\sqrt{1+\sigma^{2}}}(\mathbf{W_sx}+\mathbf{n_r}),
  \end{align}
and the distribution $p(\mathbf{n_r}|\mathbf{h_c})$ is $\mathcal{N}(0,\sigma^{2}\mathbf{W}^2_{\mathbf{n}})$.

Therefore, we have 
\begin{gather}\label{zfyr_final}
  p(\mathbf{y_r}|\mathbf{x},\mathbf{h_c})\sim \mathcal{N}(\mathbf{y_r};\frac{1}{\sqrt{1+\sigma ^{2}}}\mathbf{W_s}\mathbf{x},\frac{\sigma ^{2}}{{1+\sigma ^{2}}}\mathbf{W}^2_{\mathbf{n}}).
\end{gather}
\end{proof}

Similarly, we have the following proposition for AWGN channel.
\begin{proposition}\label{thm2}
  Under AWGN channel, the conditional distribution of $\mathbf{y_r}$ with known $\mathbf{x}$ is
  \begin{align}
    p(\mathbf{y_r}|\mathbf{x})\sim \mathcal{N}(\mathbf{y_r};\frac{1}{\sqrt{1+\sigma ^{2}}}\mathbf{W_s}\mathbf{x},\frac{\sigma ^{2}}{{1+\sigma ^{2}}}\mathbf{W}^2_{\mathbf{n}}),
  \end{align}
where $\mathbf{W_s}$ and $\mathbf{W_n}$ both bacome $\mathbf{I}_{2k}$ under AWGN channel.

\end{proposition}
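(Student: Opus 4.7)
The plan is to mirror the proof of Proposition \ref{thm1}, specialized to the AWGN setting. First, I would observe that the AWGN channel equation $y_{c,i} = x_{c,i} + n_{c,i}$ is exactly the Rayleigh model (\ref{receive signal}) with $h_{c,i} \equiv 1$, and that since the channel introduces no multiplicative distortion, no MMSE equalization is required. The receiver therefore passes $\mathbf{y_c}$ through an identity equalizer, giving $\mathbf{y_{eq}} = \mathbf{x_c} + \mathbf{n_c}$. In the notation of Proposition \ref{thm1}, this trivial equalizer corresponds to $\mathbf{W_s} = \mathbf{W_n} = \mathbf{I}_{2k}$.

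Next, I would split each complex symbol into its real and imaginary components. Because $n_{c,i} \sim \mathbb{CN}(0, 2\sigma^2)$ is circularly symmetric, its real and imaginary parts are i.i.d.\ $\mathcal{N}(0, \sigma^2)$; this is the AWGN analogue of (\ref{renoise})--(\ref{imnoise}) and does not even require a resampling step. Concatenating these components over $i = 1, \ldots, k$ yields $\mathbf{n_r} \sim \mathcal{N}(0, \sigma^2 \mathbf{I}_{2k})$. Applying the same normalization factor $\frac{1}{\sqrt{1+\sigma^2}}$ used in Proposition \ref{thm1} then produces
\begin{align*}
\mathbf{y_r} \;=\; \frac{1}{\sqrt{1+\sigma^2}}\bigl(\mathbf{x} + \mathbf{n_r}\bigr).
\end{align*}

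Finally, the linear transformation rule for Gaussians immediately gives $p(\mathbf{y_r}|\mathbf{x}) \sim \mathcal{N}\bigl(\mathbf{y_r};\frac{1}{\sqrt{1+\sigma^2}}\mathbf{x},\,\frac{\sigma^2}{1+\sigma^2}\mathbf{I}_{2k}\bigr)$, which matches the claimed expression under the identifications $\mathbf{W_s} = \mathbf{W_n} = \mathbf{I}_{2k}$. There is no genuine obstacle in the argument; every step is a degenerate version of the corresponding step in Proposition \ref{thm1}. The only bookkeeping subtlety worth flagging is that $\mathbf{W_s} = \mathbf{I}$ here is to be interpreted as the gain of a trivial (identity) equalizer rather than as the result of plugging $\mathbf{h_c} = \mathbf{1}$ into the fading MMSE formula (\ref{WsMMSE}), which would instead produce $\frac{1}{1+2\sigma^2}\mathbf{I}$. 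With that convention fixed, the proposition follows as a direct specialization of Proposition \ref{thm1}.
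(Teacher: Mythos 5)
Your proof is correct and follows exactly the route the paper intends: the paper gives no explicit proof of Proposition \ref{thm2}, merely stating it follows ``similarly'' to Proposition \ref{thm1}, and your specialization (no equalization under AWGN, real/imaginary decomposition of $n_{c,i}\sim\mathbb{CN}(0,2\sigma^2)$ into $\mathcal{N}(0,\sigma^2)$ components, then the $\frac{1}{\sqrt{1+\sigma^2}}$ normalization) is precisely that argument. Your remark that $\mathbf{W_s}=\mathbf{W_n}=\mathbf{I}_{2k}$ reflects an identity (i.e.\ absent) equalizer rather than substituting $\mathbf{h_c}=\mathbf{1}$ into (\ref{WsMMSE}) is a correct and worthwhile clarification, consistent with the paper's statement that CDDM operates ``without channel equalization'' over the AWGN channel.
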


Proposition 1 an Proposition 2 demonstrate that the channel noise after equalization and normalization-reshape can be re-sampled using $\mathbf{\epsilon} \sim \mathcal{N}(0,\mathbf{I}_{2k})$. Additionally, the noise coefficient matrix $\mathbf{W_n}$ is related to the modulo form of $\mathbf{h_c}$. As a result, $\mathbf{y_r}$ can be expressed as
\begin{figure*}[t]
  \begin{center}
    \includegraphics*[width=18cm]{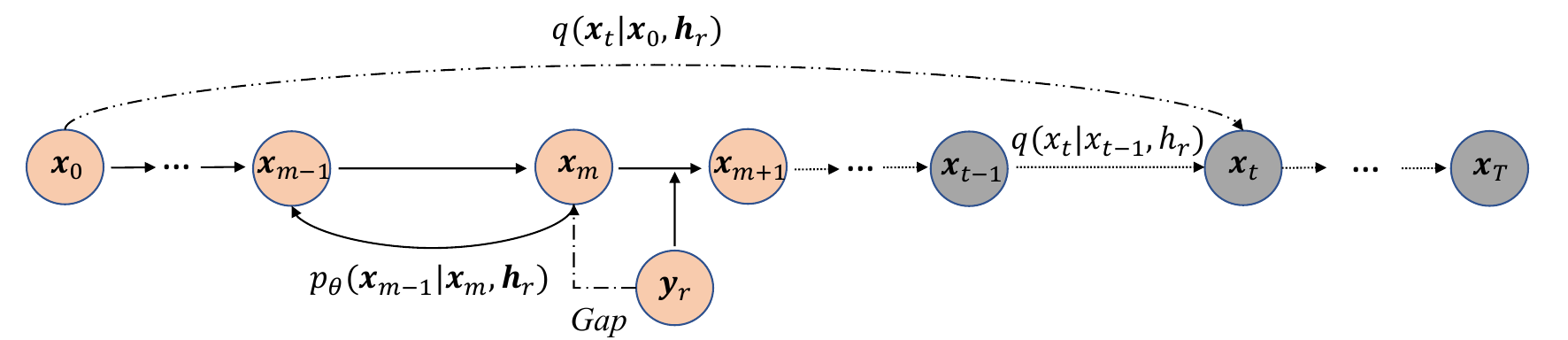}
  \end{center}
    \caption{{The forward diffusion process and reverse sampling process of the proposed CDDM.}}
    \label{ourstep}
\end{figure*}

\begin{align}
  \mathbf{y_r}=\frac{1}{\sqrt{1+\sigma ^{2}}}\mathbf{W_s}\mathbf{x}+\frac{\sigma}{\sqrt{1+\sigma ^{2}}}\mathbf{W_n}\epsilon.
\end{align}

Therefore, the proposed CDDM is trained to obtain $\mathbf{\epsilon_{\theta}}(\cdot )$, which is an estimation of $\mathbf{\epsilon}$. Here, $\mathbf{\theta}$ is all parameters of CDDM. By using $\mathbf{\epsilon_\theta}(\cdot)$ and $\mathbf{W_n}$, a sampling algorithm is proposed to obtain $\mathbf{y}$ with the aim to recover $\mathbf{W_sx}$, which will be described in the next section. 

\section{Channel Denoising Diffusion Models}
The whole strcuture of the CDDM forward diffusion and reverse sampling process is illustrated in Fig. \ref{ourstep}. In this section, we first describe the training algorithm and sampling algorithm of the proposed CDDM. We then derive the sufficient condition for the reverse sampling algorithm reducing the conditional entropy of the received signal. 

\subsection{Training Algorithm of CDDM}
For the forward process of the proposed CDDM, the original source $\mathbf{x}_0$ is
\begin{align}
  \mathbf{x}_0=\mathbf{W_sx}.
\end{align}
Let $T$ be the hyperparameter. Similar to (1), for all $t\in \{1,2,...,T\}$, we define
\begin{align}\label{x_t}
  \mathbf{x}_t=\sqrt{\alpha _t}\mathbf{x}_{t-1}+\sqrt{1-\alpha _t}\mathbf{W_n}\mathbf{\epsilon},
\end{align}
and then it can be re-parametered as
\begin{align}\label{reforward}
  \mathbf{x}_t=\sqrt{\bar{\alpha}_t}\mathbf{x}_0+\sqrt{1-\bar{\alpha}_t}\mathbf{W_n}\mathbf{\epsilon},
\end{align}
such that the distribution $q(\mathbf{x}_t|\mathbf{x}_0,\mathbf{h_r})$ is
\begin{gather}\label{distributionforward}
  {q(\mathbf{x}_t|\mathbf{x}_0,\mathbf{h_r})\sim \mathcal{N}(\mathbf{x}_t;\sqrt{\bar{\alpha}_t}\mathbf{x}_0,({1-\bar{\alpha}_t})\mathbf{W}^2_{\mathbf{n}})}.
\end{gather}

Based on (\ref{zfyr}) and (\ref{distributionforward}), if $\bar{\alpha}_m=\frac{1}{1+\sigma^{2}}$, the Kullback-Leibler (KL) divergence is
\begin{align}\label{KLzero}
  D_{KL}(q(\mathbf{x}_m|\mathbf{x}_0,\mathbf{h_r})||p(\mathbf{y_r}|\mathbf{x}_0,\mathbf{h_c}))=0,
\end{align}

This indicates that through defining a forward diffusion process, we progressively generate a signal following the same distribution as the one passed through the real channel and equalizer.
Such that \textbf{CDDM can be trained on $\mathbf{x}_m$ instead of $\mathbf{y_r}$}. $\mathbf{x}_m$ is defined by $m$ steps as (\ref{x_t}) such 
that in sampling process, the predicted distribution by CDDM can be decomposed into $m$ small steps and each of them is $p_{\mathbf{\theta}}(\mathbf{x}_{t-1}|\mathbf{x}_t,\mathbf{h_r})$ for $t\in \{1,2,...,m\}$.

The goal of CDDM is to recover $\mathbf{x}_0$ by learning the distribution of $\mathbf{x}_0$ and removing the channel noise. Therefore, the training of CDDM is performed by optimizing the variational bound on negative log likehood $L$. The variational bound of $L$ is form by $\mathbf{x}_{0:m}$ and $\mathbf{y_r}$, which is given by
\begin{align}\label{var-bound}
  &L=\mathbb{E}\ [-\log\ p_{\mathbf{\theta}}(\mathbf{x}_0|\mathbf{h_r})]\le \mathbb{E}_q[-\log(\frac{p_{\mathbf{\theta}}(\mathbf{x}_{0:m},\mathbf{y_r}|\mathbf{h_r})}{q(\mathbf{x}_{1:m},\mathbf{y_r}|\mathbf{x}_0,\mathbf{h_r})})]\nonumber\\
  &=\mathbb{E}_q\ \underbrace{[D_{KL}(q(\mathbf{y_r}|\mathbf{x}_0,\mathbf{h_r})||p(\mathbf{y_r}|\mathbf{h_r}))}_{L_y}-\underbrace{\log p_{\mathbf{\theta}}(\mathbf{x}_0|\mathbf{x}_1,\mathbf{h_r})}_{L_0}\nonumber\\
  &+\underbrace{D_{KL}(q(\mathbf{x}_m|\mathbf{y_r},\mathbf{x}_0,\mathbf{h_r})||p_{\mathbf{\theta}}(\mathbf{x}_m|\mathbf{y_r},\mathbf{h_r}))}_{L_m}\nonumber\\
  &+{\sum_{t=1}^{m}\underbrace{D_{KL}(q(\mathbf{x}_{t-1}|\mathbf{x}_t,\mathbf{x}_0,\mathbf{h_r})||p_{\mathbf{\theta}}(\mathbf{x}_{t-1}|\mathbf{x}_t,\mathbf{h_r}))}_{L_{t-1}}}],
\end{align}
where $L_m$ instructs to select the hyperparameter $m$. In this paper, we select $m$ by
\begin{align}\label{selectm}
  arg\min_{m}\   2\sigma^{2}-\frac{1-\bar{\alpha}_m}{\bar{\alpha}_m}.
  %here can explain later
\end{align}

Similar to the process in \cite{Ho}, $L_{t-1}$  can be calculated in closed-form using the Rao-Blackwellized method. The optimization object of $L_{t-1}$ can be simplified by adopting re-parameterization and re-weighting methods as following
\begin{align}\label{simL}
  \mathbb{E}_{\mathbf{x}_0,\mathbf{\epsilon}}(||\mathbf{W_n\epsilon}-\mathbf{W_n\epsilon_{\theta}}(\mathbf{x}_t,\mathbf{h_r},t)||^2_2),
  %\mathbb{E}_{\mathbf{x_0,\epsilon}}(||\mathbf{W_n\epsilon}-\mathbf{W_n\epsilon_{\theta}}(\sqrt{\bar{\alpha_t}}\mathbf{x_0}+\sqrt{1-\bar{\alpha_t}}\mathbf{W_n}\epsilon)||^2_2)
\end{align}
%\addtolength{\topmargin}{0.015in}
where $\mathbf{\epsilon_{\theta}}(\mathbf{x}_t,\mathbf{h_r},t)$ is the output of CDDM. Moreover, (\ref{simL}) can be re-weighted by ignoring the noise coefficient matrix $\mathbf{W_n}$ as following
\begin{align}\label{L_last}
\mathbb{E}_{\mathbf{x}_0,\mathbf{\epsilon}}(||\mathbf{\epsilon}-\mathbf{\epsilon_{\theta}}(\sqrt{\bar{\alpha}_t}\mathbf{x}_0+\sqrt{1-\bar{\alpha}_t}\mathbf{W_n}\mathbf{\epsilon})||^2_2).
\end{align}

Finally, to optimize (\ref{L_last}) for all $t\in\{1,2,...,T\}$, the loss function of the proposed CDDM is expressed as follows
\begin{align}\label{L_CDDM}
  L_{CDDM}(\mathbf{\theta})=\mathbb{E}_{\mathbf{x}_0,\mathbf{\epsilon},t}(||\mathbf{\epsilon}-\mathbf{\epsilon_{\theta}}(\sqrt{\bar{\alpha}_t}\mathbf{x}_0+\sqrt{1-\bar{\alpha}_t}\mathbf{W_n}\mathbf{\epsilon})||^2_2).
\end{align}

In summary, the proposed CDDM has the capability to estimate noise, due to its ability to learn to approximate the real posterior distribution $q(\mathbf{x}_{t-1}|\mathbf{x}_t,\mathbf{x}_0,\mathbf{h_r})$ with its parameterized distribution $p_{\mathbf{\theta}}(\mathbf{x}_{t-1}|\mathbf{x}_t,\mathbf{h_r})$ in the training process. The distribution approximation can be derived into noise estimation, as shown in (\ref{L_CDDM}). The training procedures of the proposed CDDM are summarized in Algorithm \ref{trainCDDM}.
%When training the model, $\mathbf{s}_1$ and $\mathbf{s}_2$ should be generated from the same dataset $S$ individually. A copy of the dataset $\tilde{S}$ is loaded with shuffling. $\mathbf{s}_1$ comes from the batch of $S$ and $\mathbf{s}_2$ comes from the batch $\tilde{S}$. For the FISF module, the SNR of the worse channel ($SNR_1$) is randomly set in a given range and we then randomly set $\gamma>0$ so that the SNR of better channel ($SNR_2$) is $SNR_1+\gamma$. Fusion rate $\alpha$ is also randomly selected in the range between 0 and 1 with step 0.1. When it is 0 or 1, which means only one source is expected to be delivered, the model degrades to an end-to-end model. The whole system takes $L$ as the loss function and all the parameters are updated jointly according to the loss $L$. The whole training procedures are described in Algorithm \ref{train}.
\begin{algorithm}[t]
  \hspace*{0.02in} {\bf \small{Input:}}
	\small{Training set $S$, hyper-parameter $T$ and $\bar{\alpha}_t$.} \\
	\hspace*{0.02in} {\bf \small{Output:}}
	\small{The trained CDDM.}
	\caption{Training algorithm of CDDM}
	\label{trainCDDM}
	\begin{algorithmic}[1] 
    \WHILE {the training stop condition is not met}
    \STATE Randomly sample $\mathbf{x}$ from $S$
    \STATE Randomly sample $t$ from $Uniform(\{1,...,T\})$
    \STATE Sapmle $|\mathbf{h_c}|$ and compute $\mathbf{H_r}$, $\mathbf{W_s}$ and $\mathbf{W_n}$
    \STATE Randomly sample $\mathbf{\epsilon}$ from $\mathcal{N}(0,\mathbf{I}_{2k})$
    \STATE Take gradient descent step according to (\ref{x_t}) and (\ref{L_CDDM})\\
    $\nabla_\mathbf{\theta}(||\mathbf{\epsilon}-\mathbf{\epsilon _\theta}(\sqrt{\bar{\alpha}_t}\mathbf{W_sx}+\sqrt{1-\bar{\alpha}_t}\mathbf{W_n}\mathbf{\epsilon})||^2_2)$
    %\IF {$\alpha==0$}
    %\STATE Set $\beta=0$ and then compute loss $L=L_2$.
    %\ELSIF {$\alpha==1$}
    %\STATE Compute loss $L=L_1$.
    %\ELSE
    %\STATE Compute loss $L=L_1+\lambda L_2$.
    %\ENDIF
    %\STATE Update all the parameters to minimize $L$.
    \ENDWHILE
	\end{algorithmic}
\end{algorithm}

\subsection{Sampling Algorithm of CDDM}\label{B}
To reduce the time consumption of sampling process and recover the transmitted signal accurately, (\ref{var-bound}) implies that selecting $m$ according to (\ref{selectm}) and setting $\mathbf{x}_m=\mathbf{y_r}$ is a promising way.
By utilizing the received signal $\mathbf{y_r}$, only $m$ steps are needed to be excuted. For each time step $t\in\{1,2,...,m\}$, the trained CDDM outputs $\mathbf{\epsilon_\theta}(\mathbf{x}_t,\mathbf{h_r},t)$, which attempts to predict $\mathbf{\epsilon}$ from $\mathbf{x}_t$ without knowledge of $\mathbf{x}_0$. A sampling algorithm is required to sample $\mathbf{x}_{t-1}$. The process is excuted for $m$ times such that $\mathbf{x}_0$ can be computed out finally.

We first define the sampling process $f(\mathbf{x}_{t-1})$ with the knowledge of $\mathbf{\epsilon}$ as following
\begin{align}
  f(\mathbf{x}_{t-1})=q(\mathbf{x}_{t-1}|\mathbf{x}_t,\mathbf{x}_0,\mathbf{h_r}).
\end{align}

Applying Bayes rule, the distribution can be expressed as a Gaussian distribution
\begin{gather}
  q(\mathbf{x}_{t-1}|\mathbf{x}_t,\mathbf{x}_0,\mathbf{h_r})\nonumber\\
  \sim \mathcal{N}(\mathbf{x}_{t-1};\sqrt{\bar{\alpha}_{t-1}}\mathbf{x}_0+\sqrt{1-\bar{\alpha}_{t-1}}\frac{\mathbf{x}_t-\sqrt{\bar{\alpha}_t}\mathbf{x}_0}{\sqrt{1-\bar{\alpha}_t}},0),
\end{gather}
\begin{algorithm}[t]
  \hspace*{0.02in} {\bf \small{Input:}}
  \small{$\mathbf{y_r}$,$\mathbf{h_r}$,hyperparameter $m$} \\
  \hspace*{0.02in} {\bf \small{Output:}}
  \small{$\mathbf{y}$}
  \caption{Sampling algorithm of CDDM}
  \label{sampleCDDM}
  \begin{algorithmic}[1]
    \STATE $\mathbf{x}_m=\mathbf{y_r}$ 
    \FOR {$t=m,...,2$}
    \STATE $\mathbf{z}=\mathbf{W_n}\mathbf{\epsilon_\theta}(\mathbf{x}_t,\mathbf{h_r},t)$
    \STATE $\mathbf{x}_{t-1}=\sqrt{\bar{\alpha}_{t-1}}(\frac{\mathbf{x}_t-\sqrt{1-\bar{\alpha}_t}\mathbf{z}}{\sqrt{\bar{\alpha}_t}})+\sqrt{1-\bar{\alpha}_{t-1}}\mathbf{z}$
    \ENDFOR
    \STATE $t=1$
\STATE $\mathbf{z}=\mathbf{W_n}\mathbf{\epsilon_\theta}(\mathbf{x}_1,\mathbf{h_r},1)$
    \STATE $\mathbf{y}=\frac{\mathbf{x}_1-\sqrt{1-\bar{\alpha}_{1}}\mathbf{z}}{\sqrt{\bar{\alpha}_1}}$
  \end{algorithmic}
\end{algorithm}
where $\mathbf{x}_0$ is acquired by re-writing (\ref{reforward}) as following
\begin{align}
  \mathbf{x}_0=\frac{1}{\sqrt{\bar{\alpha}_t}}(\mathbf{x}_t-\sqrt{1-\bar{\alpha}_t}\mathbf{W_n}\mathbf{\epsilon}).
\end{align}

However, only $\mathbf{\epsilon_\theta}(\mathbf{x}_t,\mathbf{h_r},t)$ is available for sampling. $\mathbf{x}_0$ is derived through an estimation process by replacing $\mathbf{\epsilon}$ with $\mathbf{\epsilon_\theta}(\mathbf{x}_t,\mathbf{h_r},t)$ as following
\begin{align}\label{predictx-0}
  {\hat{\mathbf{x}}_0}=\frac{1}{\sqrt{\bar{\alpha}_t}}(\mathbf{x}_t-\sqrt{1-\bar{\alpha}_t}\mathbf{W_n}\mathbf{\epsilon_\theta}(\mathbf{x}_t,\mathbf{h_r},t)).
\end{align}

As a result, the sampling process is replaced with
\begin{align}
  f_\mathbf{\theta}(\mathbf{x}_{t-1})=p_\mathbf{\theta}(\mathbf{x}_{t-1}|\mathbf{x}_t,\hat{\mathbf{x}}_0,\mathbf{h_r}).
\end{align}

Without the knowledge of $\mathbf{\epsilon}$, a sample of $\mathbf{x}_{t-1}$ is
\begin{align}\label{samplext}
  \mathbf{x}_{t-1}=&\sqrt{\bar{\alpha}_{t-1}}\underbrace{(\frac{1}{\sqrt{\bar{\alpha}_t}}(\mathbf{x}_t-\sqrt{1-\bar{\alpha}_t}\mathbf{W_n}\mathbf{\epsilon_\theta}(\mathbf{x}_t,\mathbf{h_r},t)))}_{estimate\ \mathbf{x}_0}\nonumber\\
  &+\underbrace{\sqrt{1-\bar{\alpha}_{t-1}}\mathbf{W_n\epsilon_\theta}(\mathbf{x}_t,\mathbf{h_r},t)}_{sample\ \mathbf{x}_{t-1}}.
\end{align}

Note that for the last step $t=1$, we only predict $\mathbf{x_0}$ such that sampling is taken as
\begin{align}
  \mathbf{y}=\frac{1}{\sqrt{\bar{\alpha}_1}}(\mathbf{x}_1-\sqrt{1-\bar{\alpha}_1}\mathbf{W_n\epsilon_\theta}(\mathbf{x}_1,\mathbf{h_r},1)).
\end{align}
The sampling method is summarized in Algorithm \ref{sampleCDDM}.

\subsection{Analysis on the conditional entropy}
To explain the denoising ability of the CDDM, we compare the conditional entropy between $\mathbf{x}_t$ and $\mathbf{x}_{t-1}$, where $\mathbf{x}_t$ is considered as the receiving signal because
(\ref{KLzero}) has shown that $\mathbf{x}_t$ can belong to the same conditional distribution as the received signal.
%and $\mathbf{x}_{t-1}$ is sampled as (\ref{samplext})

For all $t\in\{1,2,...,T\}$, $\mathbf{x}_t$ is acquired as (\ref{reforward}). According to (\ref{distributionforward}), we can get the conditional entropy of the $i$-th element of $\mathbf{x}_t$ as $\mathcal{H}({x}_{t,i}|\mathbf{x}_0,\mathbf{h})=\frac{1}{2}\ln({W^2_{n,i}(1-\bar{\alpha}_t)})+C$, $i=1,2,...,2k$. Here, $C$ is a constant.
$\mathbf{x}_{t-1}$ is sampled as (\ref{samplext}).
% This circumstances necessitate us to treat $\mathbf{x}_{t-1}$ as a random variable with unkonwn distribution. 
However, $\mathbf{x}_t$ is unknown in $\mathcal{H}({x}_{t-1,i}|\mathbf{x}_0,\mathbf{h})$. We can reparameter (\ref{samplext}) with (\ref{reforward}) and obtain
\begin{align}\label{entropyxt_1}
  \mathbf{x}_{t-1}=\sqrt{\bar{\alpha}_{t-1}}\mathbf{x}_0+\beta_t\mathbf{W_n\epsilon}-\beta_t\mathbf{W_n\epsilon_\theta}(\cdot)+\gamma_{t-1} \mathbf{W_n\epsilon_\theta}(\cdot),
\end{align}
where $\beta_t=\frac{\sqrt{1-\bar{\alpha}_t}}{\sqrt{\alpha_t}}$ and $\gamma_t=\sqrt{1-\bar{\alpha}_{t}}$.
$\mathbf{\epsilon}\sim \mathcal{N}(0,\mathbf{I})$ and thus $\mathbf{x}_{t-1}$ is a random variable with respect to $\mathbf{\epsilon}$ with unknown distribution.
%We denote the distribution of ${x}_{t-1,i}$ with known $\mathbf{x}_0$ and $\mathbf{h}$ as $g({x}_{t-1,i}|\mathbf{x}_0,\mathbf{h})$.

Now, we introduce two assumptions for the following analysis.
\begin{figure}[t]
  \begin{center}
    \includegraphics*[width=8.65cm]{./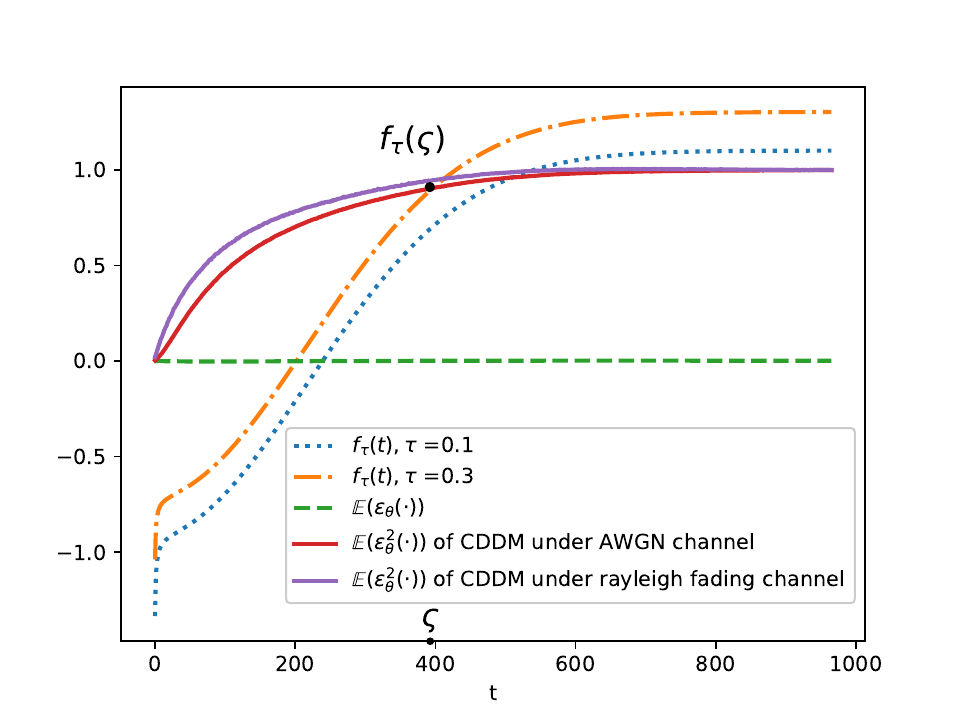}
  \end{center}
    \caption{\small{Experiment results of $\mathbb{E}(\epsilon_\theta(\cdot))$ and $\mathbb{E}(\epsilon^2_\theta(\cdot))$ with theoretical values of $f_\tau(t)$ versus sampling step $t$. The black dot marked the maximum sampling step, below which the model satisfies the sufficient condition under AWGN channel.}}
    \label{prove}
\end{figure}

\begin{assumption}
  There exists a constant bound $\tau>0$ on the element-wise loss function:
\begin{align}\label{assumption1}
  \mathbb{E}_{\mathbf{\epsilon}}(||{\epsilon_i}-{\epsilon_{\theta,i}}(\cdot)||^2_2)\leq \tau.
\end{align}
\end{assumption}
This reasonable and necessary assumption is derived from the fact that the network is optimized sufficiently, meaning the loss function $\mathbb{E}_{\mathbf{\epsilon}}(||\mathbf{\epsilon}-\mathbf{\epsilon_{\theta}}(\cdot)||^2_2)\leq \chi $, which can be written into element-wise form as (\ref{assumption1}).

\begin{assumption}
  The mathematical expectation of network output is 0, i.e.,
\begin{align}
  \mathbb{E}_{\mathbf{\epsilon}}({\epsilon_{\theta,i}}(\cdot))=0.
\end{align}
\end{assumption}
This assumption will be verified through Monte-Carlo in the following. Thus, we have the following theorem.
\begin{theorem}
Based on the two assumptions mentioned above, for all $t\in\{1,2,...,T\}$ and $i=1,2,...,2k$, the sufficiency condition of
  \begin{align}
    \mathcal{H}({x}_{t-1,i}|\mathbf{x}_0,\mathbf{h})\leq \mathcal{H}({x}_{t,i}|\mathbf{x}_0,\mathbf{h})
    \end{align}
    is
  \begin{align}
  \mathbb{E}_{\mathbf{\epsilon}}(\epsilon^2_{\theta,i}(\cdot))\geq \frac{1-\bar{\alpha}_t-\beta_t\gamma_{t-1}}{\gamma^2_{t-1}-\beta_t\gamma_{t-1}}-\frac{\beta^2_t-\beta_t\gamma_{t-1}}{\gamma^2_{t-1}-\beta_t\gamma_{t-1}}\tau.
  \end{align}
\end{theorem}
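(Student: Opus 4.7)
The plan is to reduce the entropy comparison to a variance comparison. Since $x_{t,i}$ is conditionally Gaussian given $\mathbf{x}_0$ and $\mathbf{h}$ with variance $W_{n,i}^2(1-\bar{\alpha}_t)$, its differential entropy equals $\tfrac{1}{2}\ln(W_{n,i}^2(1-\bar{\alpha}_t))+C$ exactly. In contrast, $x_{t-1,i}$ is a non-trivial, non-Gaussian function of $\epsilon$ through the network $\epsilon_\theta$, so its exact entropy is intractable. The standard escape hatch is the maximum entropy principle: for any real random variable $X$ with variance $v$, $\mathcal{H}(X)\leq \tfrac{1}{2}\ln(2\pi e\, v)$, with equality iff $X$ is Gaussian. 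Applying this to $x_{t-1,i}$ conditional on $(\mathbf{x}_0,\mathbf{h})$ reduces the target inequality $\mathcal{H}(x_{t-1,i}\mid\mathbf{x}_0,\mathbf{h})\leq \mathcal{H}(x_{t,i}\mid\mathbf{x}_0,\mathbf{h})$ to the sufficient condition $\mathrm{Var}(x_{t-1,i}\mid\mathbf{x}_0,\mathbf{h})\leq W_{n,i}^2(1-\bar{\alpha}_t)$.

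Next I would compute this variance from (\ref{entropyxt_1}). Conditional on $(\mathbf{x}_0,\mathbf{h})$, the deterministic part $\sqrt{\bar{\alpha}_{t-1}}\,x_{0,i}$ drops out, leaving $x_{t-1,i}-\sqrt{\bar{\alpha}_{t-1}}\,x_{0,i} = W_{n,i}\bigl[\beta_t\epsilon_i + (\gamma_{t-1}-\beta_t)\epsilon_{\theta,i}(\cdot)\bigr]$. By Assumption~2, $\mathbb{E}_\epsilon[\epsilon_{\theta,i}]=0$, and trivially $\mathbb{E}_\epsilon[\epsilon_i]=0$, so the mean of the bracket is zero and its variance equals its second moment. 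Expanding yields three terms:
\begin{equation}
W_{n,i}^{-2}\,\mathrm{Var}(x_{t-1,i}\mid\mathbf{x}_0,\mathbf{h}) = \beta_t^2 + 2\beta_t(\gamma_{t-1}-\beta_t)\,\mathbb{E}_\epsilon[\epsilon_i\epsilon_{\theta,i}] + (\gamma_{t-1}-\beta_t)^2\,\mathbb{E}_\epsilon[\epsilon_{\theta,i}^2].
\end{equation}

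The cross-moment $\mathbb{E}_\epsilon[\epsilon_i\epsilon_{\theta,i}]$ is not directly available, but the polarization identity $2\epsilon_i\epsilon_{\theta,i}=\epsilon_i^2+\epsilon_{\theta,i}^2-(\epsilon_i-\epsilon_{\theta,i})^2$ combined with Assumption~1 gives the lower bound $\mathbb{E}_\epsilon[\epsilon_i\epsilon_{\theta,i}]\geq\tfrac{1}{2}(1+\mathbb{E}_\epsilon[\epsilon_{\theta,i}^2]-\tau)$. To use this bound to upper bound the variance, I need the sign of the coefficient $2\beta_t(\gamma_{t-1}-\beta_t)$. A short computation using $\bar{\alpha}_t=\alpha_t\bar{\alpha}_{t-1}$ shows $\beta_t^2-\gamma_{t-1}^2 = \alpha_t^{-1}-1>0$, hence $\gamma_{t-1}-\beta_t<0$ and the coefficient is negative, so substituting the lower bound of the cross-moment indeed produces an upper bound on the variance.

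Finally, imposing the requirement that this upper bound be at most $1-\bar{\alpha}_t$ and collecting terms gives a linear inequality in $E := \mathbb{E}_\epsilon[\epsilon_{\theta,i}^2]$ of the form $a\gamma_{t-1}\,E \leq 1-\bar{\alpha}_t-\beta_t\gamma_{t-1} + \beta_t(\gamma_{t-1}-\beta_t)\tau$, using the identity $\beta_t+(\gamma_{t-1}-\beta_t)=\gamma_{t-1}$ to combine the $E$ terms cleanly. Since $\gamma_{t-1}^2-\beta_t\gamma_{t-1}<0$, dividing flips the inequality and yields exactly the stated condition, where the $\tau$-coefficient simplifies via $-\beta_t(\beta_t-\gamma_{t-1})/[\gamma_{t-1}(\gamma_{t-1}-\beta_t)]=\beta_t/\gamma_{t-1}$. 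The only subtle step is keeping track of inequality directions when the coefficient $\gamma_{t-1}-\beta_t$ and the denominator $\gamma_{t-1}^2-\beta_t\gamma_{t-1}$ are both negative; everything else is algebraic rearrangement once the maximum entropy bound has been invoked.
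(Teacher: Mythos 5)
Your proposal is correct and follows essentially the same route as the paper's proof: reduce the entropy comparison to a variance comparison via the maximum entropy principle, bound the cross-moment $\mathbb{E}_{\epsilon}(\epsilon_i\epsilon_{\theta,i}(\cdot))$ from below using Assumption 1, and exploit the sign of $\gamma_{t-1}-\beta_t$ to turn that into an upper bound on $\mathrm{Var}(x_{t-1,i}\mid\mathbf{x}_0,\mathbf{h})$ before rearranging. Your explicit verification that $\beta_t^2-\gamma_{t-1}^2=\alpha_t^{-1}-1>0$ is a small improvement over the paper, which merely asserts $\gamma_{t-1}^2-\beta_t\gamma_{t-1}<0$; the stray ``$a$'' in your final linear inequality is a typo, not a gap.
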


\begin{IEEEproof}
According to Assumption 1, we can derive the cross-correlation coefficient of the two random variables ${\epsilon_i}$ and ${\epsilon_{\theta,i}}(\cdot)$ as following
\begin{gather}
  \mathbb{E}_{\mathbf{\epsilon}}(||{\epsilon_i}-{\epsilon_{\theta,i}}(\cdot)||^2_2)=\mathbb{E}({\epsilon_i^2}-2{\epsilon_i}{\epsilon_{\theta,i}}(\cdot)+{\epsilon_{\theta,i}^2}(\cdot)) \leq {\tau}.
\end{gather}
We then have
\begin{gather}\label{optimize}
  2\mathbb{E}({\epsilon_i}{\epsilon_{\theta,i}}(\cdot))\geq  1-\mathbf{\tau}+\mathbb{E}({\epsilon_{\theta,i}^2}(\cdot)).
\end{gather}

Let ${\pi}^2_{t-1,i}$ be the variance of ${x}_{t-1,i}$. According to (\ref{entropyxt_1}), (\ref{optimize}) and Assumption 2, we have
%\begin{gather}\label{cor}
%{\pi}^2_{t-1,i}=\mathbb{E}({x}_{t-1,i}^2)-\mathbb{E}^2({x}_{t-1,i})\nonumber\\
 % ={W_{n,i}^2}\mathbb{E}(\beta^2_t{\epsilon_i^2}+(\beta_t-\gamma_{t-1})^2{\epsilon_{\theta,i}^2}(\cdot)-2\beta_t(\beta_t-\gamma_{t-1}){\epsilon_i}{\epsilon_{\theta,i}}(\cdot))\nonumber\\
 % \leq W_{n,i}^2(\beta_t^2+(\beta_t-\gamma_{t-1})^2\mathbb{E}(\epsilon_{\theta,i}^2(\cdot))\nonumber\\
 % -\beta_t(\beta_t-\gamma_{t-1})(1-\tau+\mathbb{E}(\epsilon_{\theta,i}^2)))\nonumber\\
 % = W_{n,i}^2((\gamma^2_{t-1}-\beta_t \gamma_{t-1})\mathbb{E}(\epsilon^2_{\theta,i}(\cdot))\nonumber\\
  %+\beta_t\gamma_{t-1}+(\beta_t^2-\beta_t\gamma_{t-1})\tau)
%\end{gather}
\begin{align}\label{cor}
&{\pi}^2_{t-1,i}=\mathbb{E}({x}_{t-1,i}^2)-\mathbb{E}^2({x}_{t-1,i})\nonumber\\
  &={W_{n,i}^2}\mathbb{E}(\beta^2_t{\epsilon_i^2}+(\beta_t-\gamma_{t-1})^2{\epsilon_{\theta,i}^2}(\cdot){\setlength\arraycolsep{0.3pt}-}2\beta_t(\beta_t{\setlength\arraycolsep{0.3pt}-}\gamma_{t-1}){\epsilon_i}{\epsilon_{\theta,i}}(\cdot))\nonumber\\
 & \leq W_{n,i}^2(\beta_t^2+(\beta_t-\gamma_{t-1})^2\mathbb{E}(\epsilon_{\theta,i}^2(\cdot))\nonumber\\
 & -\beta_t(\beta_t-\gamma_{t-1})(1-\tau+\mathbb{E}(\epsilon_{\theta,i}^2)))\nonumber\\
 & = W_{n,i}^2((\gamma^2_{t-1}-\beta_t \gamma_{t-1})\mathbb{E}(\epsilon^2_{\theta,i}(\cdot))\nonumber\\
 & +\beta_t\gamma_{t-1}+(\beta_t^2-\beta_t\gamma_{t-1})\tau).
\end{align}

\begin{figure}[t]
  \begin{center}
    \includegraphics*[width=8.5cm]{./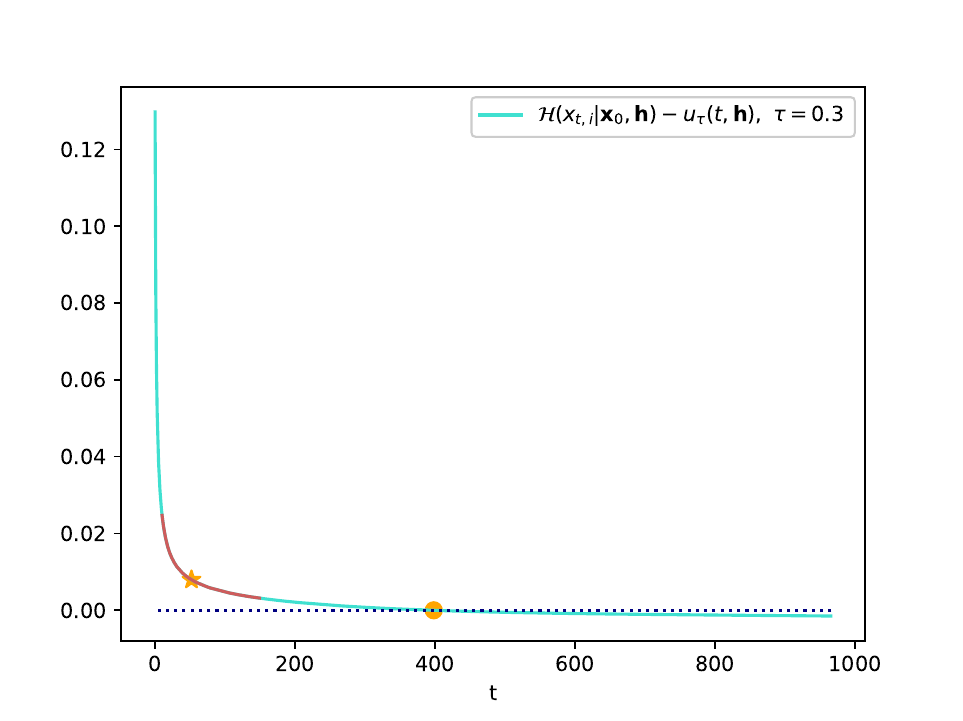}
  \end{center}
    \caption{\small{Experiment results of $\mathcal{H}(x_{t,i}|\mathbf{x}_0,\mathbf{h})-u_\tau(t,\mathbf{h})$ at $\tau=0.3.$}}
    \label{recommend}
\end{figure}

Let $u_\tau(t,\mathbf{h})$ be the upper bound of $\mathcal{H}({x}_{t-1,i}|\mathbf{x}_0,\mathbf{h})$. With the maximum entropy principle, we have
\begin{align}\label{upperbound}
  \mathcal{H}({x}_{t-1,i}|\mathbf{x}_0,\mathbf{h}) &\leq \frac{1}{2}\ln(\pi^2_{t-1,i})+C\nonumber\\
   &\leq\frac{1}{2}\ln(W_{n,i}^2((\gamma^2_{t-1}-\beta_t \gamma_{t-1})\mathbb{E}(\epsilon^2_{\theta,i}(\cdot))\nonumber\\
   &+\beta_t\gamma_{t-1}+(\beta_t^2-\beta_t\gamma_{t-1})\tau))+C \nonumber\\
   &\triangleq u_\tau(t,\mathbf{h}).
\end{align}
Here, we have $\gamma^2_{t-1}-\beta_t \gamma_{t-1}<0$. Therefore, it is easy to obtain the necessity and sufficiency conditional for the inequalities $u_\tau(t,\mathbf{h})\leq \mathcal{H}({x}_{t,i}|\mathbf{x}_0,\mathbf{h})$ as following
\begin{gather}\label{suff_condition}
  \mathbb{E}(\epsilon^2_{\theta,i}(\cdot))\geq \frac{1-\bar{\alpha}_t-\beta_t\gamma_{t-1}}{\gamma^2_{t-1}-\beta_t\gamma_{t-1}}-\frac{\beta^2_t-\beta_t\gamma_{t-1}}{\gamma^2_{t-1}-\beta_t\gamma_{t-1}}\tau \triangleq f_\tau(t).
\end{gather}

Taking the necessity and sufficiency condition into (\ref{upperbound}), we can get the sufficiency condition as the theory. 
\end{IEEEproof}

In Fig. \ref{prove}, the dashed line represents the Monte Carlo results of $\mathbb{E}(\epsilon_\theta(\cdot))$ approaching zero, which proves that Assumption 2 holds in the proposed model. It also demonstrates that there exists a limitation $\varsigma$. If $t\leq \varsigma $, the condition (\ref{suff_condition}) holds. This suggests that the number of sampling steps should be limited in order to achieve performance improvements. Fig. \ref{recommend} shows the value of $\mathcal{H}(x_{t,i}|\mathbf{x}_0,\mathbf{h})-u_\tau(t,\mathbf{h})$ at $\tau=0.3$ versus sampling step $t$. It is observed that the curve initially exhibits a sharp decline and subsequently levels off rapidly. Considering two figures together, the sampling step of CDDM can not be determines utilizing (\ref{selectm}) in case where the channel noise power is excessively high, as it would exceed the threshold $\varsigma $. Furthermore, even if the sampling step is below $\varsigma$, the gradient becomes very small when it falls within the flattened region. This can lead to the conditional entropy remaining stagnant, resulting in no performance improvement.
On the other hand, if the sampling step is too small, the channel noise may not be eliminated sufficiently. Based on the analysis above, we recommend to set the maximum sapmling step $t_{max}\in[10,150]$ as shown in the Fig. \ref{recommend} with red line. Correspondingly, (\ref{selectm}) is revised into
\begin{align}\label{revisem}
  m=\min (t_{max},\arg\min_{m}\   2\sigma^{2}-\frac{1-\bar{\alpha}_m}{\bar{\alpha}_m}).
\end{align}

\section{The joint CDDM and JSCC for semantic communications}
In this section, the proposed CDDM is applied into a semantic communications system based on JSCC for wireless image transmission.
\subsection{System Structure}
An overview architecture of the joint CDDM and JSCC system is shown in Fig. \ref{CDESC}. An RGB source image $\mathbf{s}$ is encoded as transmitted signal $\mathbf{x}\in \mathbb{R}^{2k}$ by a JSCC encoder. In this paper, the JSCC is built upon the Swin Transformer\cite{Ze} backbone, which has a more powerful expression ability than vision transformer by replacing the standard multi-head self attention in vision transformer with a shift window multi-head self attention.
$\mathbf{x}$ is then transmitted and processed into $\mathbf{y_r}$ at the receiver, as described in Section \ref{I}. At the receiver, the proposed CDDM removes the channal noise from $\mathbf{y_r}$ using Algorithm \ref{sampleCDDM}. Following this, the output of CDDM is fed into the JSCC decoder to reconstruct the source image $\mathbf{\hat{s}}$.
\subsection{Training algorithm}
\addtolength{\topmargin}{0.015in}
The entire training algorithm of the joint CDDM and JSCC system consists of three stages. In the first stage, the JSCC encoder and decoder are trained jointly through the channel shown in Fig. \ref{CDESC}, except for the CDDM module, to minimize the distance $d(\mathbf{s,\hat{s}})$. Therefore, the loss function for this stage is given by
%MSE is used as the performance metric and a slight KL divergence punishment with normal distribution is exerted on the JSCC encoder. The slight punishment does not reduce the final performance but it can constraint $\mathbf{x}$ in a more structured way, thereby enhancing the convergence of CDDM in the second stage. 
\begin{align}
  L_1(\mathbf{\phi,\varphi})&=\mathbb{E}_{\mathbf{s}\sim p_\mathbf{s}}\mathbb{E}_{\mathbf{y_r}\sim p_{\mathbf{y_r|s}}}d(\mathbf{s},\mathbf{\hat{s}}).
\end{align}
where $\mathbf{\phi}$ and $\mathbf{\varphi}$ encapsulate all parameters of JSCC encoder and decoder respectively.

In the second stage, the parameters of the JSCC encoder are fixed such that CDDM can learn the distribution of $\mathbf{x}_0$ via Algorithm \ref{trainCDDM}. The training process is not affected by the channel noise power because Algorithm \ref{trainCDDM} has a special forward diffusion process, and the process has been designed specially to simulate the distribution of channel noise. Benefitting from this, CDDM is designed for handling various channel conditions and requires only one training process.

In the third stage, the JSCC decoder is re-trained jointly with the trained JSCC encoder and CDDM to minimize $d(\mathbf{s,\hat{s}})$. The entire joint CDDM and JSCC system is performed through the real channel, while only the parameters of the decoder are updated. The loss function is derived as
\begin{align}
  L_3(\mathbf{\varphi})=\mathbb{E}_{\mathbf{y}\sim p_{\mathbf{y|s}}}d(\mathbf{s},\mathbf{\hat{s}}).
\end{align}
The training algorithm is summarized in Algorithm \ref{trainCDESC}.
\begin{algorithm}[t]
  \hspace*{0.02in} {\bf{Input:}}
	\small{Training set $\mathbf{S}$, hyper-parameter $T$, $\bar{\alpha_t}$, and the channel estimation results $\mathbf{h_c}$ and $\sigma^2$.} \\
	\hspace*{0.02in} {\bf{Output:}}
	\small{The well-trained joint CDDM and JSCC system.}
	\caption{Training algorithm of the joint CDDM and JSCC}
	\label{trainCDESC}
	\begin{algorithmic}[1]
    \WHILE {the training stop condition of stage one is not met }
    \STATE Randomly sample $\mathbf{s}$ from $S$
    \STATE Perform forward propagation through channel without CDDM.
    \STATE Compute $L_1(\mathbf{\mathbf{\phi, \varphi}})$ and update $\mathbf{\phi, \varphi}$
    \ENDWHILE
    \WHILE {the training stop condition of stage two is not met }
    \STATE Randomly sample $\mathbf{s}$ from $S$
    \STATE Compute $\mathbf{s}$ as $\mathbf{x}$
    \STATE Train CDDM with Algorithm \ref{trainCDDM}.
    \ENDWHILE
    \WHILE {the training stop condition of stage three is not met }
    \STATE Randomly sample $\mathbf{s}$ from $S$
    \STATE Perform forward propagation through channel with noise power $\sigma^2$ with the trained CDDM
    \STATE Compute $L_3(\mathbf{\varphi})$ and update $\mathbf{\varphi}$
    \ENDWHILE
	\end{algorithmic}
\end{algorithm}
\subsection{Model Structure}
The schematic of the JSCC encoder and the U-Net structure in CDDM is illustrated in Figure \ref{CDESC}. In the JSCC encoder, the initial module is the patch embedding, responsible for partitioning the source image into non-overlapping patches. Subsequently, $M$ Basicblocks are employed to extract the semantic features from the source image. The $i$-th basicblock consists of a patch merging module and $N_i$ Swim Transformers, where $i=1,2...M$. After addressed by a basicblock, the height and width of the features are halved, while the channel dimensions are increased to $P_i$. Finally, a convolution head(Conv Head) layer is adopted to compute the features as transmitted signal $\mathbf{x}$.
The structure of the JSCC decoder is identical to that of the JSCC encoder, with the exception that the downsample modules in the JSCC encoder are replaced with upsample modules.

The model structure of CDDM is predominantly based on the convolutional improved U-Net architecture \cite{Ronneberger}. Initially, $\mathbf{y}_r$ undergoes a convolution layer and then serves as the input of the U-Net. Subsequently, the output of U-Net is further processed by another convolutional layer to generate the final output $\mathbf{y}$. The U-Net is comprised of various components, including convolutional residual (Conv-Res) blocks \cite{Zagoruyko}, convolutional attention (Conv-Attn) blocks, Down-Sampling blocks, and Up-Sampling blocks. A Down-Sampling block is a convolutional layer that performs down-sampling and maintains the same number of input and output channels.
The Up-Sampling block consists of an interpolation layer followed by a convolutional layer. A Conv-Attn is an attention block commonly adopted in classic transformer \cite{Vaswani}, but with the notable distinction of employing convolutional layers as a replacement for fully-connected (FC) layers. The structure of Conv-Res is depicted in Fig. \ref{CDESC}. In comparison to the classic residual block, the Conv-Res block substitutes FC layers with convolutional layers. Moreover, an additional convolutional layer is incorporated into the residual path to adjust the data dimension and enhance the model's capacity. The sampling step $t$ is addressed by a MLP and the embedded within the middle of the Conv-Res block. Multiple instances of these blocks are sequentially connected incorping two additional residual paths, ultimately forming the U-Net architecture.

\section{EXPERIMENTS RESULTS}
In this section, we provide a detailed description of the experimental setup and presented a considerable amount of experimental results, which comprehensively demonstrate the effectiveness of our proposed CDDM system. Additionally, we assess its natural robustness to channel estimation errors and its adaptability to different SNRs. 
\subsection{Experiment Setup}
%可以再加个性能评测，说明用哪些指标评测系统的哪些性能 有效性，鲁棒性，泛化性。加载setup最后

\textbf{Datesets}: To obtain comprehensive and universally applicable results, we train and evaluate the proposed joint CDDM and JSCC system on two image datasets. 
CIFAR10 \cite{Krizhevsky} dataset is employed for low-resolution images with dimensions of $32\times 32$, comprising of 50000 color images for training and 10000 images for testing. The high-resolution images are obtained from DIV2K dataset \cite{DIV2K}, which includes 800 images for training and 100 images for testing. These images are collected from a wide range of real-world scenes and have a uniform resolution of 2K. 
During the training process, the images with high resolution are randomly cropped into patches with a size of $256\times256$.

\begin{figure}[t]
  \begin{center}
    \includegraphics*[width=8.5cm]{./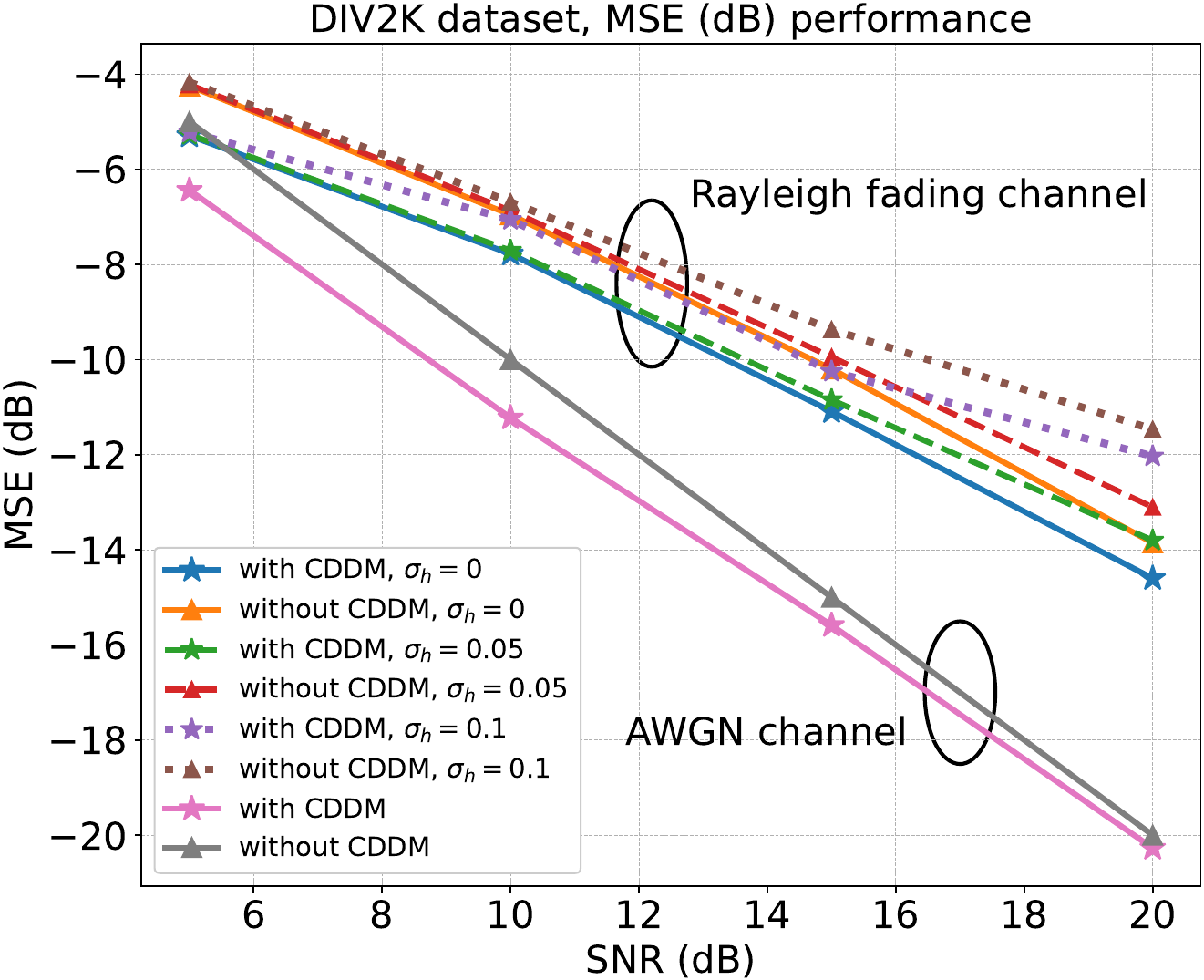}
  \end{center}
    \caption{{MSE performance of DIV2K versus SNRs under AWGN and Rayleigh fading channel with or without channel estimation errors. The CBR is $3/128$.}}
    \label{MSE}
\end{figure}

\textbf{Comparsion schemes}: We conduct a comparative analysis between the proposed joint CDDM and JSCC system and two other systems: the JSCC system without CDDM and the classical handcrafted separation-based source and channel coding system.
More specifically, the JSCC system shares an identical structure and training configuration within the joint CDDM and JSCC system. It is worth emphasizing that in the event of a change in channel SNR, both systems undergo retraining to optimize their performance under the specific SNR condition.
For the classical system, we employ the JPEG2000 codec for compression and LDPC\cite{DVB} codec for channel coding, marking as ``JPEG2000+LDPC". Here, we consider DVB-T2 LDPC codes with a block length of 64800 bits for different coding rates and quadrature amplitude modulations (QAM) adapted to the channel conditions.
%这里是不是需要for example

\textbf{Evaluation Metrics}: We qualify the performance of all three schemes with both PSNR and MSSSIM. PSNR is a widely-used pixel-wise metric that measures the visiblity of errors between the reconstructed image and the reference image. A higher PSNR value indicates a smaller loss in the image quality. In this case, we adopt MSE to calculate $d(\cdot)$ during optimizating our networks. 
MSSSIM is a perceptual metric that specially concentrates on the structural similarity and content of images, which aligns more closely with the evaluation results of the human visual system (HVS). The multi-scale design allows it to demonstrate consistent performance across images with varying resolutions. 
The value of MSSSIM ranges from 0 to 1, where a higher value indicates a higher similarity to the reference image. Also in this case, we adopt 1-MSSSIM to calculate $d(\cdot)$ during optimizating our networks. When testing the performance, we convert MSSSIM into the form of dB for more intuitive observation and compaision. The formula is $MSSSIM\ (dB)=-10\ {\log}_{10}(1-MSSSIM)$. 

\begin{figure*}[t]
  \begin{center}
    \includegraphics*[width=18cm]{./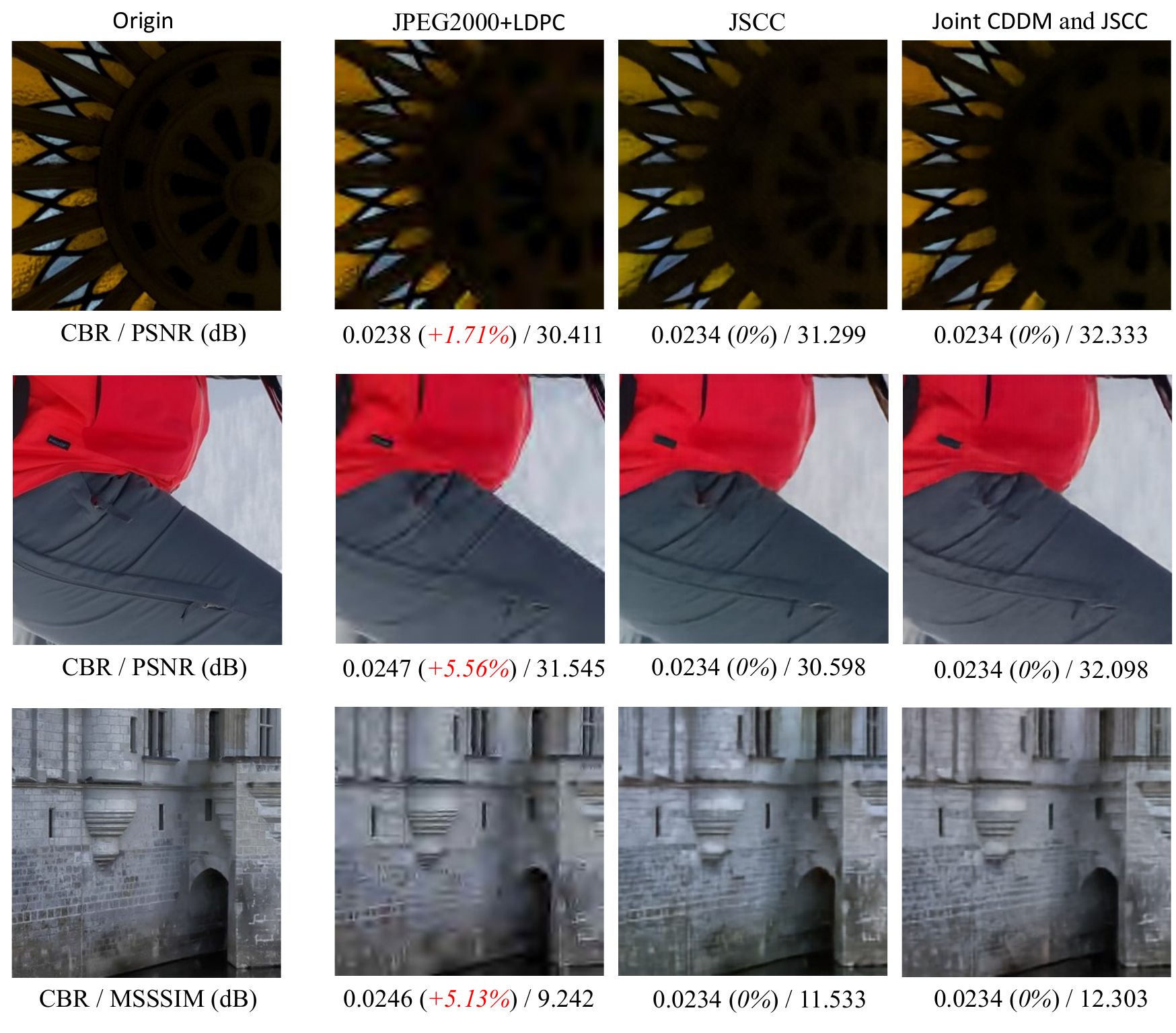}
  \end{center}
    \caption{{Examples of visualization results under Rayleigh fading channel at SNR=$10$ dB. The four columns display the original images and the reconstructed images obtained from their respective systems. The red number corresponds to the percentage of additional bandwidth cost in comparison to the joint CDDM and JSCC system.}}
    \label{visual}
\end{figure*}

\textbf{Training details}: For the CDDM training and sampling algorithms, we configure the parameter $T=1000$ and set $\alpha_t$ to constants decreasing linearly from an initial value of $\alpha_1=0.9999$ to a final value $\alpha_T=0.9800$. We set $t_{max}=93$ for CIFAR10 dataset while $t_{max}=52$ for DIV2K dataset. During optimizing the CDDM, we employ the Adma optimizer \cite{Diederik2} and implement a cosine warm-up learning rate schedule \cite{Loshchilov} with an initial learning rate of 0.0001.
In terms of the JSCC structure, the number of basic-blocks and patches varies depending on the dataset. For CIFAR10 dataset, the number of Basicblocks, denoted as $M$, is set to $2$, Swin Transformer numbers $[N_1,N_2]=[2,4]$ and channel dimensions $[P_1,P_2]=[128,256]$. On the other hand, for DIV2K dataset comprising high-resolution images, $M$ is set to $4$, Swin Transformer numbers $[N_1,N_2,N_3,N_4]=[2,2,6,2]$ and channel dimensions $[P_1,P_2,P_3,P_4]=[128,192,256,320]$. We employ Adam optimizer with a learning rate 0.0001 to optimize the JSCC \cite{KeYang}.

\begin{figure}[t]
  \begin{center}
    \includegraphics*[width=8.5cm]{./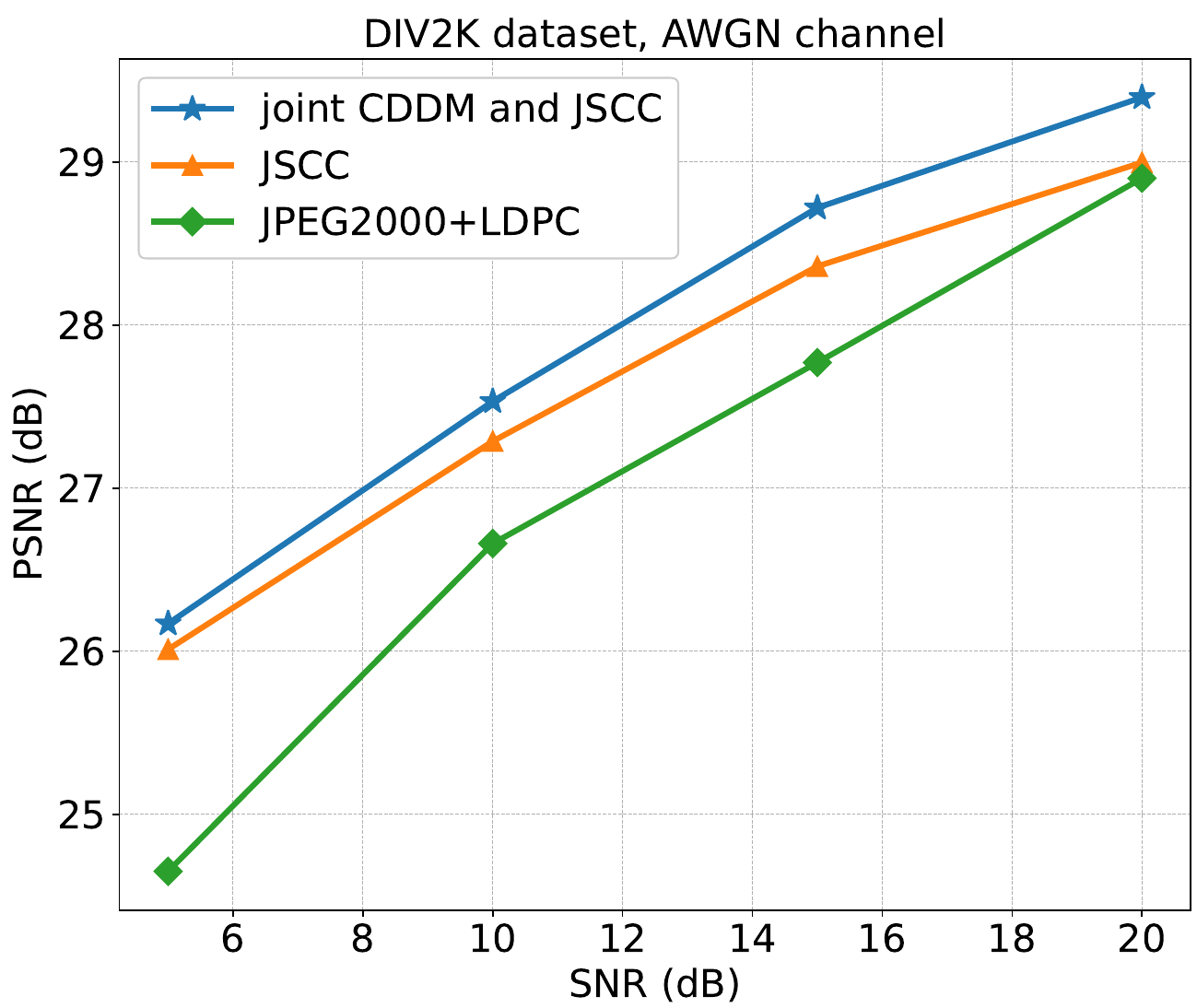}
  \end{center}
    \caption{{PSNR performance of DIV2K dataset versus SNR under AWGN channel. The CBR is set to $3/128$.}}
    \label{PSNR_AWGN_DIV2K_SNRs}
\end{figure}

\begin{figure}[t]
  \begin{center}
    \includegraphics*[width=8.5cm]{./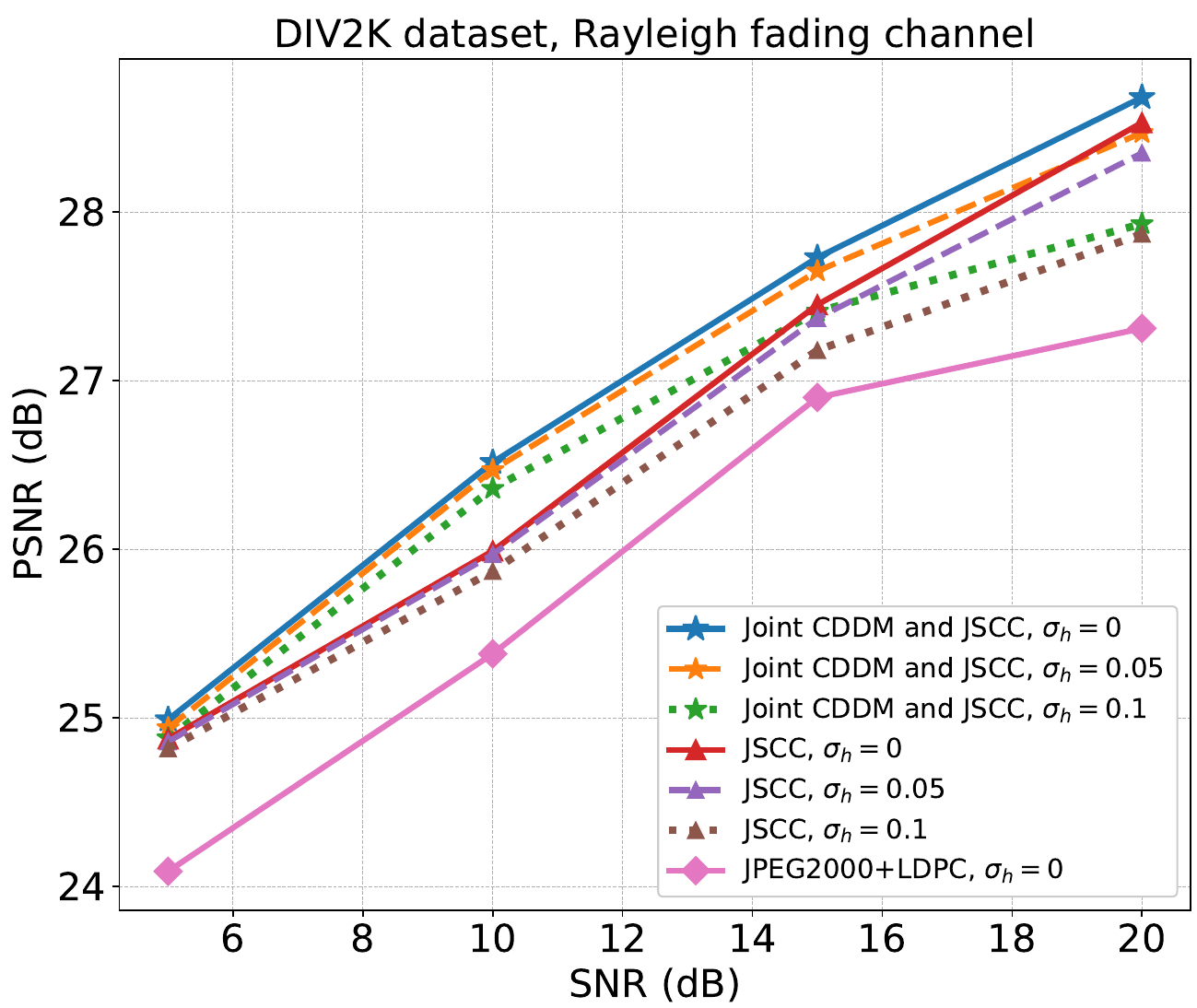}
  \end{center}
    \caption{{PSNR performance of DIV2K versus SNR under Rayleigh fading channel with or without channel estimation errors. The CBR is $3/128$.}}
    \label{PSNR_rayleigh_DIV2K_SNRs}
\end{figure}

\begin{figure}[t]
  \begin{center}
    \includegraphics*[width=8.5cm]{./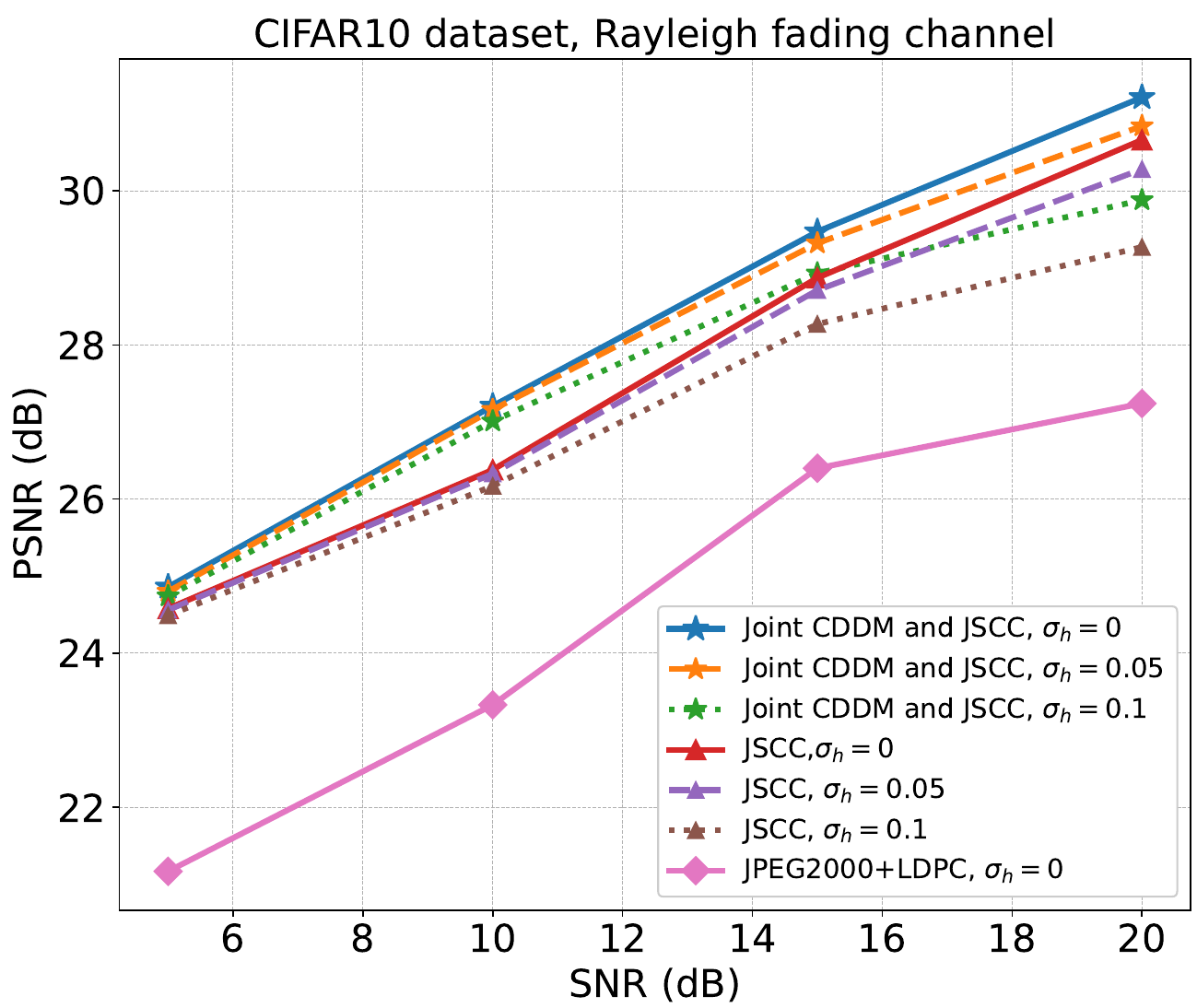}
  \end{center}
    \caption{{PSNR performance of CIFAR10 versus SNR under Rayleigh fading channel with or without channel estimation errors. The CBR is $1/8$.}}
    \label{PSNR_rayleigh_CIFAR10_SNRs}
\end{figure}

\begin{figure}[t]
  \begin{center}
    \includegraphics*[width=8.5cm]{./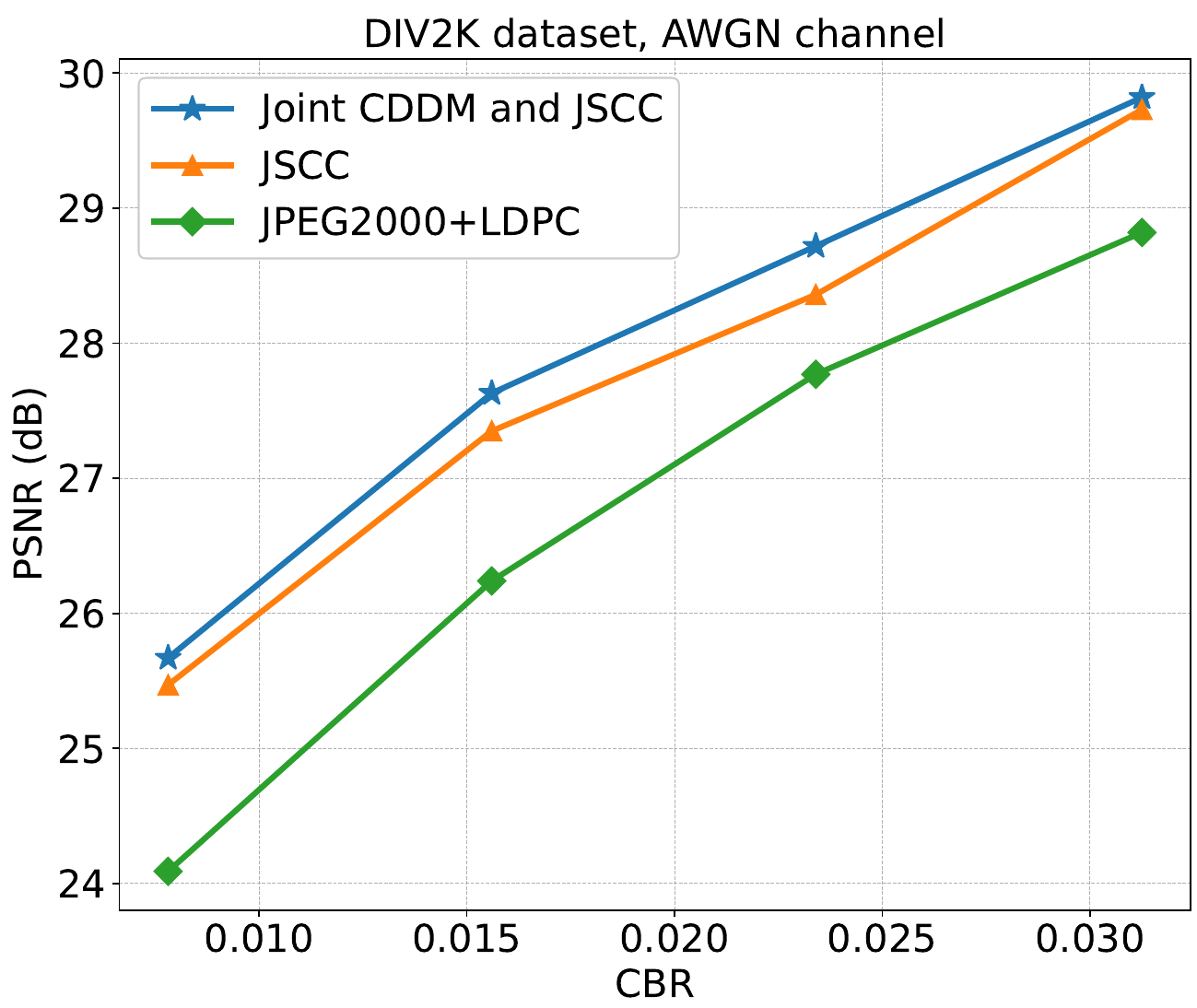}
  \end{center}
    \caption{{PSNR performance of DIV2K dataset versus CBR under AWGN channel. The SNR is $10$ dB.}}
    \label{PSNR_AWGN_DIV2K_CBRs}
\end{figure}

\subsection{MSE performance and visualization results}
Fig. \ref{MSE} illustrates the MSE performance of CDDM in different SNR regimes. The results are based on DIV2K dataset with JSCC trained for maximizing PSNR and channel bandwidth ratio (CBR) is set to $3/128$. In the case of using CDDM, we calculate the MSE between $\mathbf{x}$ and $\mathbf{y}$, while in the case of not using CDDM, we calculate the MSE between $\mathbf{x}$ and $\mathbf{y}_r$. As shown in Fig. \ref{CDESC}, $\mathbf{y}_r$ and $\mathbf{y}$ are the input and output of CDDM, respectively. The solid line in Fig. \ref{MSE} shows that the system with CDDM performs much better than the system without CDDM in all SNR regimes under both AWGN and Rayleigh fading channels. 
For example, for AWGN channel, the proposed CDDM reduces the MSE by $0.27$ dB at SNR=$20$ dB..
Meanwhile, it can be seen that as the SNR decreases, the gain of CDDM in MSE increases. This indicates that as the SNR decreases, i.e., the channel noise increases, the proposed CDDM is easier to remove more noise, e.g. $1.44$ dB gain at SNR=$5$ dB for AWGN channel. Moreover, it is important to note that under Rayleigh fading channel, MMSE has theoretically minimized the MSE, but CDDM can further reduce the MSE after MMSE. The reason for this fact is that CDDM can learn the distribution of $ \mathbf{x}_0=\mathbf{W_sx}$, and utilizes this learned knowledge to remove the noise thereby further reducing the MSE.

Additionally, to conduct a more comprehensively evaluation of our model, we assess the robustness of the proposed CDDM under Rayleigh fading channel with the presence of channel estimation errors. The receiver obtains a noisy estimation of $\mathbf{h}$, denoted as $\mathbf{\hat{h}}$ which is formulated as $\mathbf{\hat{h}}=\mathbf{h}+{\Delta}\mathbf{h}$, where ${\Delta}\mathbf{h}\sim \mathbb{CN} (0,\sigma_h^2\mathbf{I})$.
In Fig. \ref{MSE}, the dashed lines correspond to lower estimation errors with $\sigma_h=0.05$ and the dotted lines represent more estimation errors with $\sigma_h=0.1$.
It is observed that under $\sigma_h=0.05$, the joint CDDM and JSCC system maintains gains relative to perfect channel estimation across all SNR ranges. However, as $\sigma_h$ increases to $0.1$, the gains tend to decrease. This reduction is particularly notable at SNRs of $10$ and $20$ dB.

\begin{figure}[t]
  \begin{center}
    \includegraphics*[width=8.5cm]{./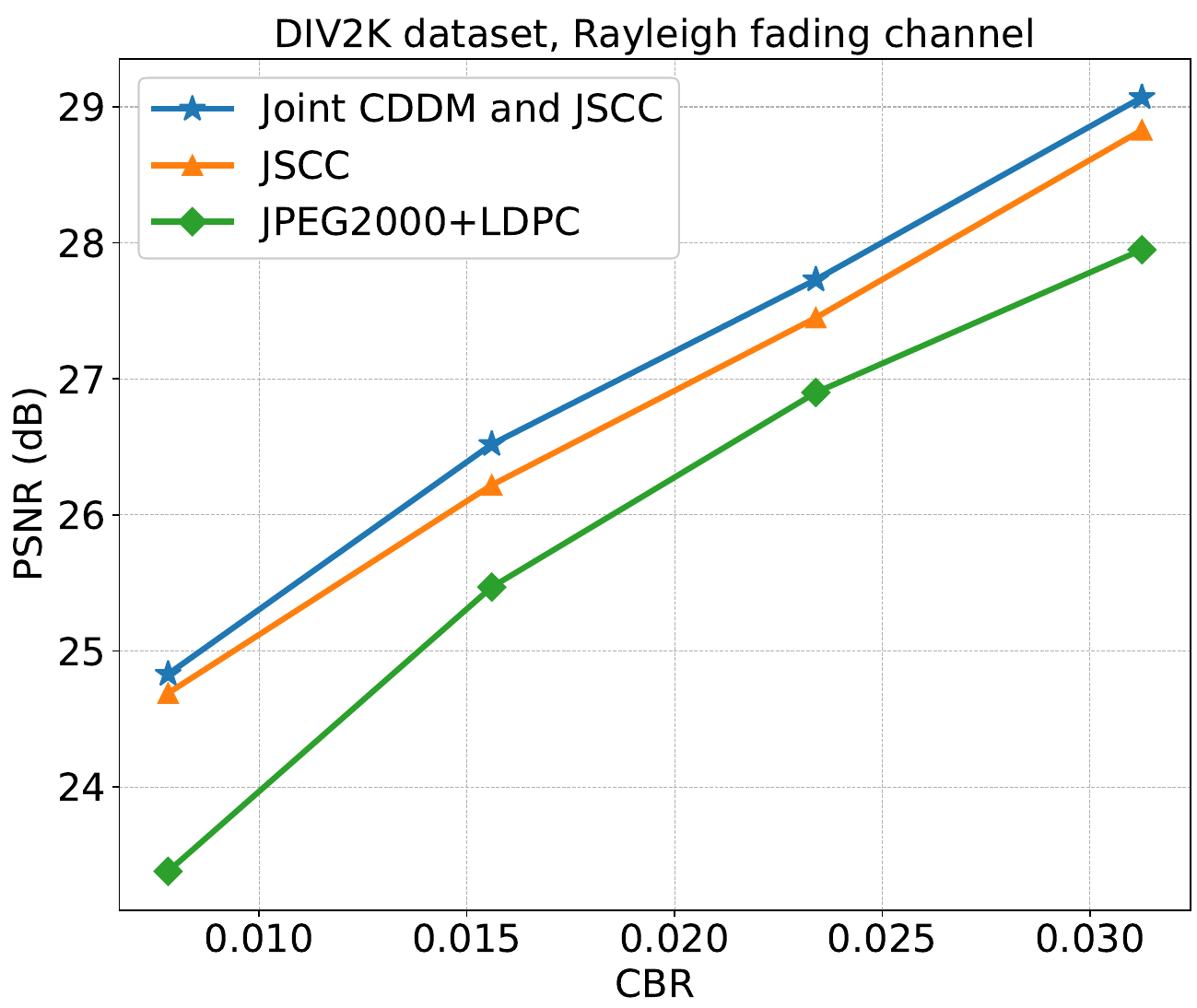}
  \end{center}
    \caption{{PSNR performance of DIV2K dataset versus CBR under Rayleigh fading channel. The SNR is $10$.}}
    \label{PSNR_rayleigh_DIV2K_CBRs}
\end{figure}

\begin{figure}[ht]
  \begin{center}
    \includegraphics*[width=8.5cm]{./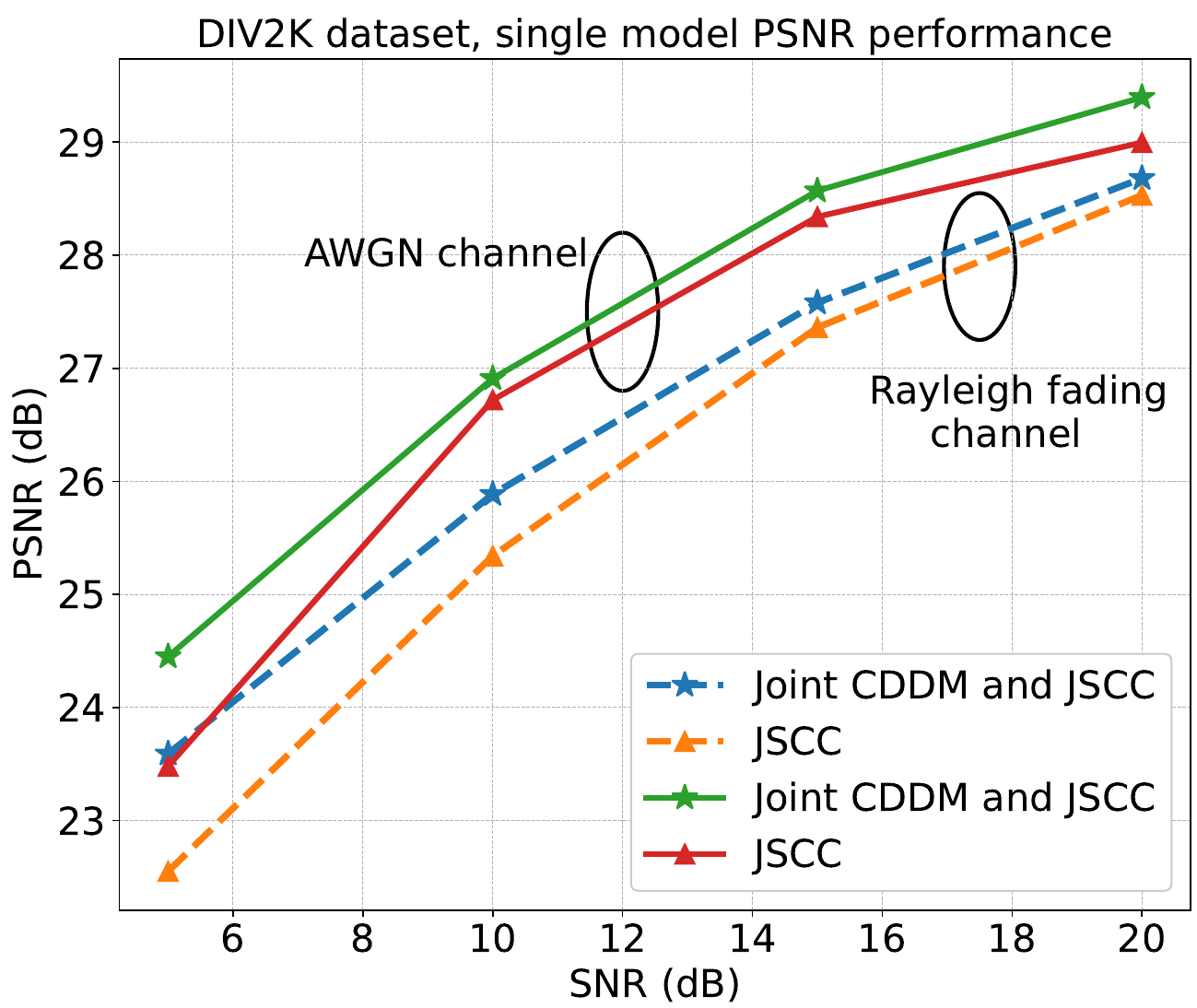}
  \end{center}
    \caption{{PSNR performance, trained at a SNR of $20$ dB, for DIV2K versus SNR under both AWGN and Rayleigh fading channels. The SNR is $10$ dB.}}
    \label{DIV2K_PSNR_single}
\end{figure}

\begin{figure}[ht]
  \begin{center}
    \includegraphics*[width=8.5cm]{./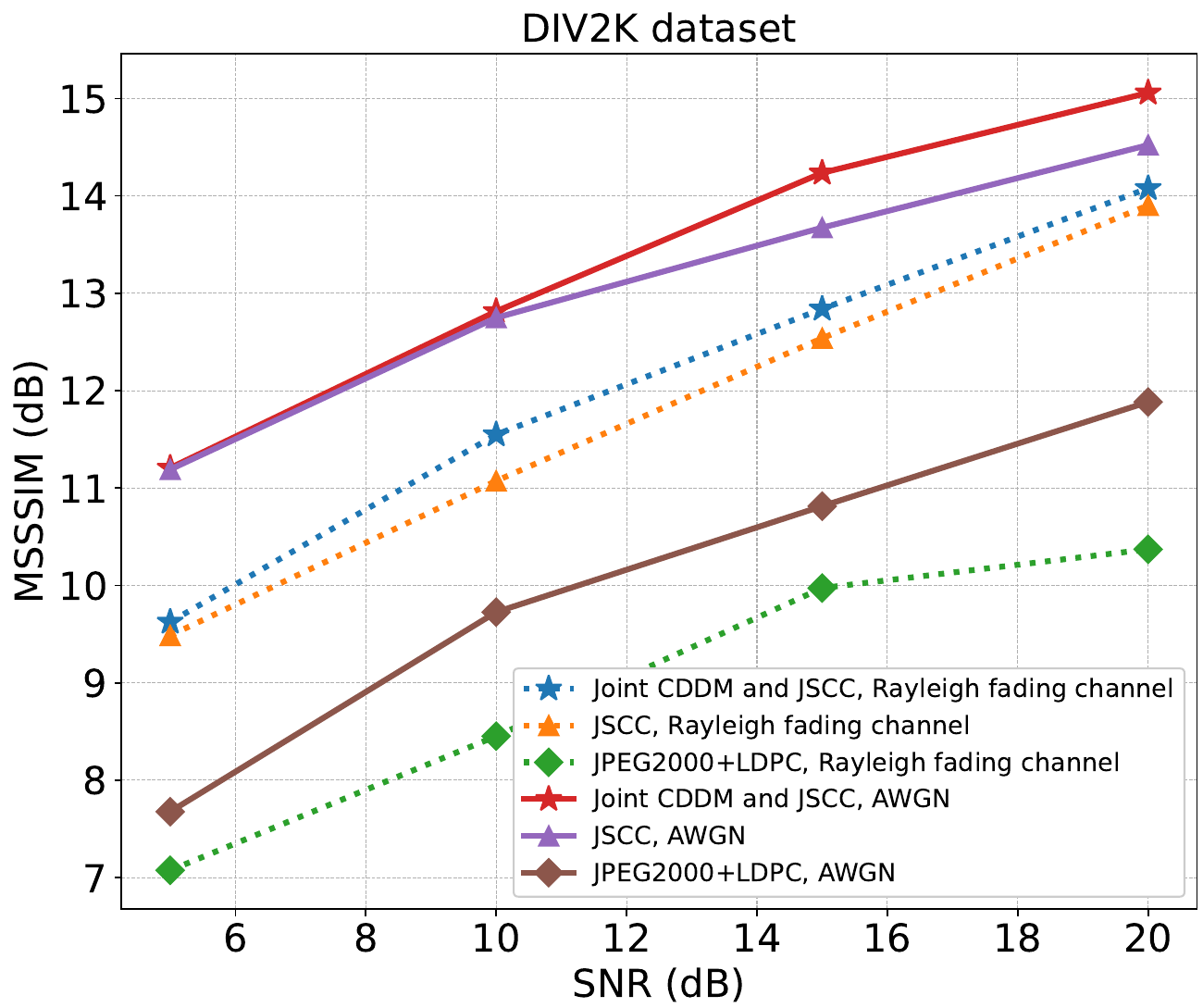}
  \end{center}
    \caption{{MSSSIM performance of DIV2K versus SNR under both AWGN and Rayleigh fading channels. The CBR is $3/128$.}}
    \label{DIV2K_MSSSIM_Rayleigh_SNRs}
\end{figure}

\begin{figure}[ht]
  \begin{center}
    \includegraphics*[width=8.5cm]{./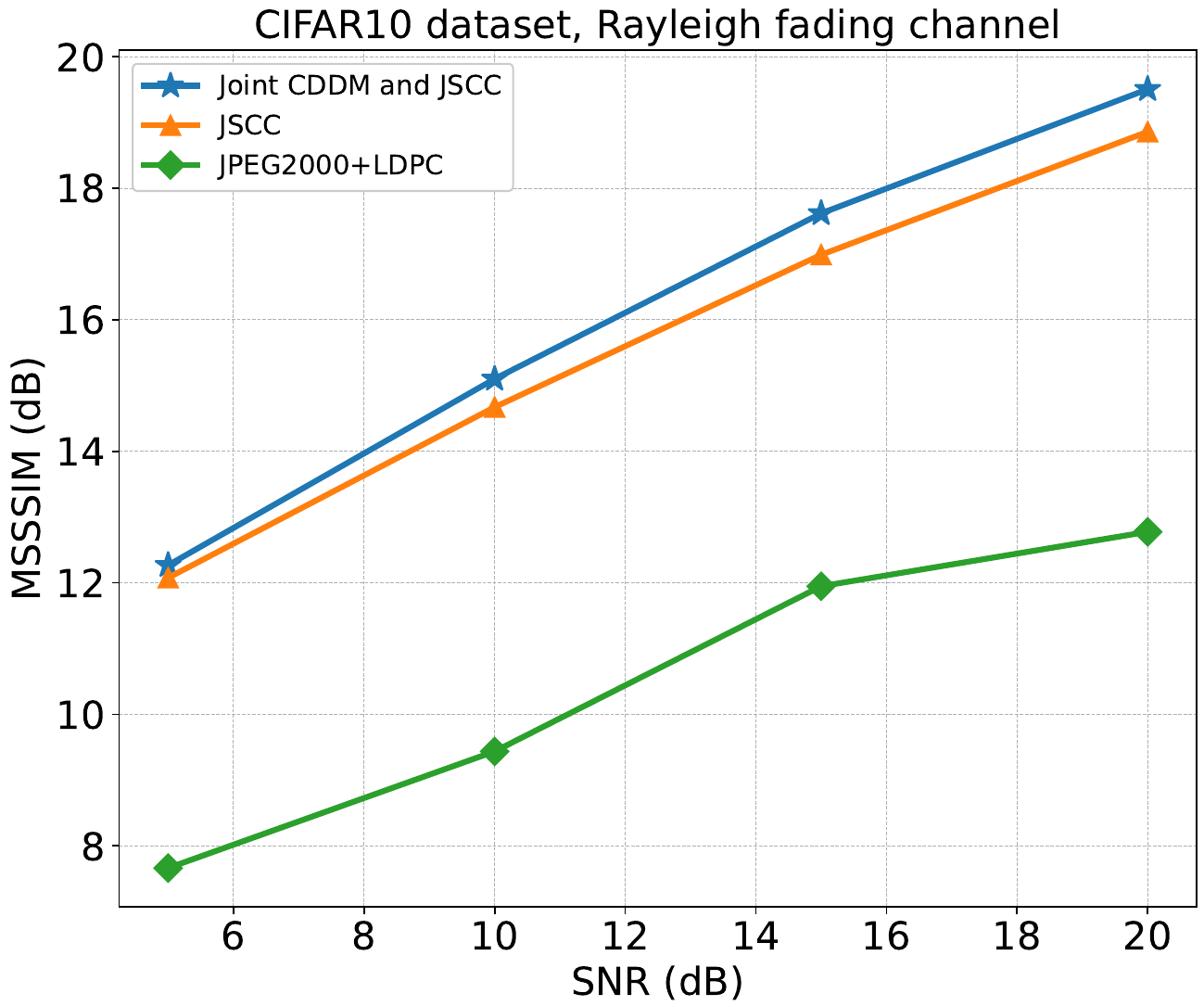}
  \end{center}
    \caption{{MSSSIM performance of CIFAR10 versus SNR under Rayleigh fading channel. The CBR is $1/8$.}}
    \label{CIFAR10_MSSSIM}
\end{figure}

\begin{figure}[ht]
  \begin{center}
    \includegraphics*[width=8.65cm]{./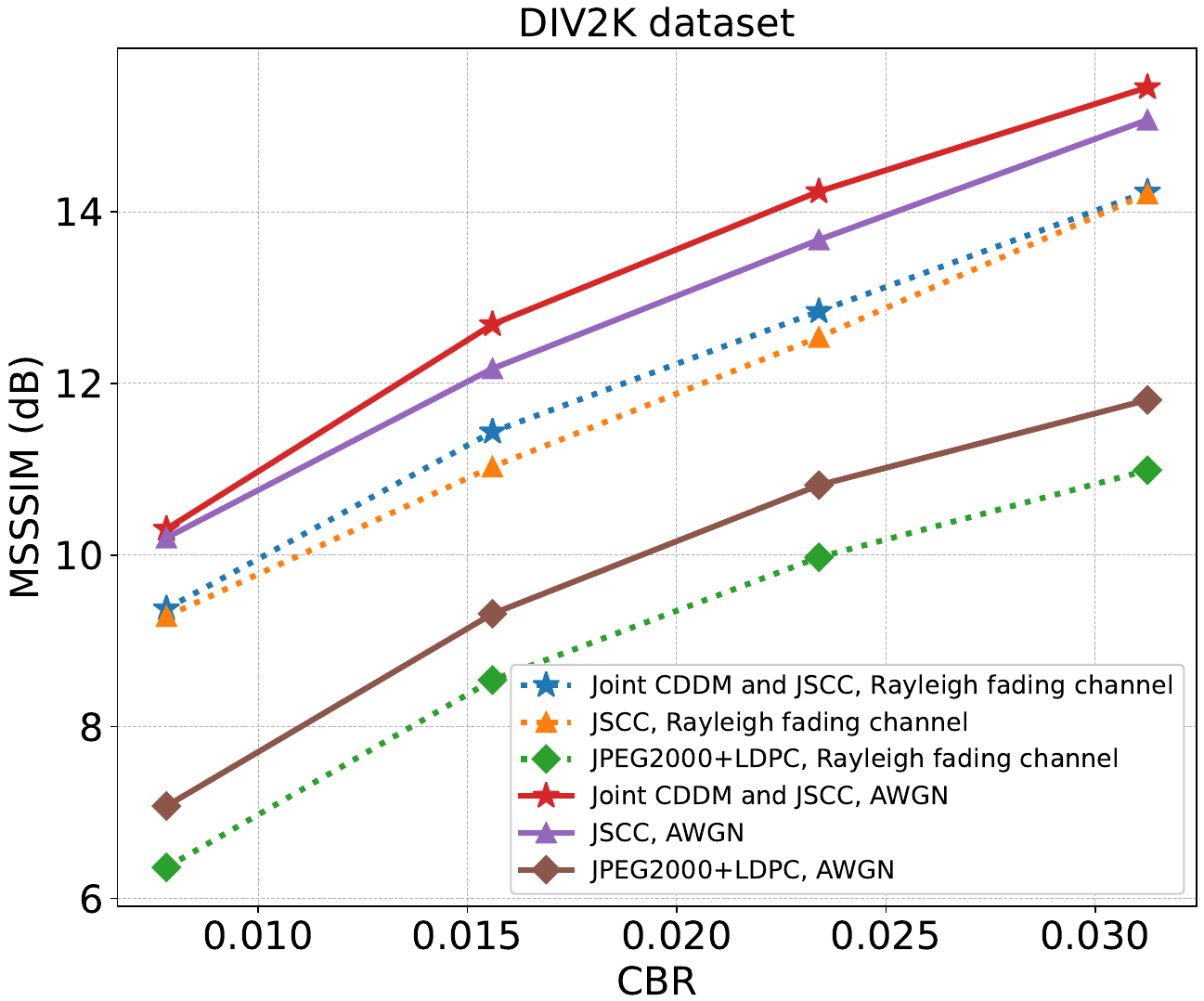}
  \end{center}
    \caption{{MSSSIM performance of DIV2K dataset versus CBR under both AWGN and Rayleigh fading channels. The SNR is $10$ dB.}}
    \label{DIV2K_rayleigh_MSSSIM_CBRs}
\end{figure}
Fig. \ref{visual} visualizes the reconstructions generated by the three systems. The results are obtained under Rayleigh fading channel with perfect channel estimation and an SNR of $10$ dB. It can be observed clearly that both JSCC-based systems outperform JPEG2000+LDPC in terms of visual quality, despite a slightly lower CBR. However, the reconstructed images obtained from the JSCC system demonstrate significant color aberration when compared to their corresponding original images. 
For example, the first image exhibits a lean towards a pale yellow hue, while the second and third images tend to lean towards a cyan color tone.
On the contrary, our joint CDDM and JSCC system simultaneously demonstrates superior color consistency and better visual quality.
\subsection{PSNR performance}
Fig. \ref{PSNR_AWGN_DIV2K_SNRs} illustrates the PSNR performance for DIV2K dataset versus SNR under AWGN channel. The CBR is configured to $3/128$. 
Our joint CDDM and JSCC system demonstrates superior performance compared to the JSCC system across a range of SNRs from $5$ to $20$ dB. Furthermore, the joint CDDM and JSCC system achieves significantly better performance when compared to the JPEG2000+LDPC system. Specifically, at an SNR of $20$ dB, the performance of the JPEG2000+LDPC system is comparable to that of the JSCC system, but still exhibits a $0.5$ dB inferiority compared to our joint CDDM and JSCC system.

Fig. \ref{PSNR_rayleigh_DIV2K_SNRs} and \ref{PSNR_rayleigh_CIFAR10_SNRs} illustrate the PSNR performance for both DIV2K and CIFAR10 datasets under Rayleigh fading channel. The CBR is $3/128$ for DIV2K and $1/8$ for CIFAR10. The solid line, dashed line and dotted line represent that $\sigma_h$ is $0$, $0.05$ and $0.1$, respectively. It can be observed that, the joint CDDM and JSCC system consistently outperforms the JSCC system across both datasets and all SNRs, i.e. $0.83$ dB for CIFAR10 dataset and $0.53$ dB for DIV2K dataset at SNR=$10$ dB with perfect channel estimation. Meanwhile, it is worth noting that the gain in PSNR performance for DIV2K dataset tends to decrease as the SNR increases when $\sigma_h=0.1$, which is aligns with the decrease in MSE performance gain.
The experimental results under both datasets, conducted at a channel estimation error level of $\sigma_h=0.1$, highlight the lack of natural robustness in our system when exposed to high channel estimation errors and high SNR conditions. This finding underscores the need to devise a specialized framework to mitigate the influence of channel estimation errors and enhance the system robustness in the future.

% \begin{figure}[ht]
%   \begin{center}
%     \includegraphics*[width=8.5cm]{./fig/AWGN_MSSSIM-eps-converted-to.pdf}
%   \end{center}
%     \caption{{MSSSIM performance of DIV2K dataset versus SNR under AWGN channel. The CBR is $3/128$.}}
%     \label{DIV2K_MSSSIM_AWGN_SNRs}
% \end{figure}
Fig. \ref{PSNR_AWGN_DIV2K_CBRs} and \ref{PSNR_rayleigh_DIV2K_CBRs} show the PSNR performance for DIV2K dataset in different CBRs under AWGN and Rayleigh fading channle, respectively. The SNR is set to $10$ dB. It is evident that our joint CDDM and JSCC system maintains effectiveness for complex high-resolution DIV2K dataset across various CBRs despite that the performance gain decreases as the CBR increases. This phenomenon can be attributed to the increase in the dimensionality of the transmitted signal $x$ when the CBR increases, thereby leading to a notable augmentation in the complexity of the learned distribution. However, to maintain experimental fairness, the structure and depth of the CDDM remain unchanged for different CBRs, consequently impeding the model's capacity to effectively learn the complex distribution and leading to a decline in performance gain.

Fig. \ref{DIV2K_PSNR_single} illustrates the PSNR performance versus SNR for DIV2K dataset over both AWGN and Rayleigh fading channel. In this experiment, the joint CDDM and JSCC system, as well as the JSCC system, are trained at a fixed SNR of $20$ dB and evaluated across various SNR values. It is evident that our joint CDDM and JSCC system consistently outperforms the JSCC system. More importantly, the performance gain becomes more pronounced as the SNR decreases in the Rayleigh fading channel. We attribute this phenomenon to the training of our CDDM utilizing Algorithm \ref{trainCDDM}, which encompasses a wide range of SNRs. Consequently, when the SNR varies, our CDDM still effectively reduces noise by adjusting the samlping step $m$, leading to enhanced performance. In contrast, the performance of the JSCC system deteriorates rapidly as the SNR decreases. This observation validates the adaptability of our joint CDDM and JSCC system to different SNRs.
%这段最后的解释重新写一下，把adaptive能力放到最后

%A single model trained at SNR=$20$ dB is evaluated across all the SNRs.
\subsection{MSSSIM performance}

% \begin{figure}[ht]
%   \begin{center}
%     \includegraphics*[width=8.5cm]{./fig/AWGN_MSSSIM_CBR-eps-converted-to.pdf}
%   \end{center}
%     \caption{{MSSSIM performance of DIV2K dataset versus CBR under AWGN channel. The SNR is $10$ dB.}}
%     \label{DIV2K_AWGN_MSSSIM_CBRs}
%\end{figure}

Fig. \ref{DIV2K_MSSSIM_Rayleigh_SNRs} shows the MSSSIM performance versus SNR for DIV2K dataset over both AWGN channel and Rayleigh fading channel. The solid lines represent performance under AWGN channel and the dotted lines represent performance under Rayleigh fading channel. The results demonstrate that under AWGN channel, our joint CDDM and JSCC system achieves a notable improvement in MSSSIM performance at SNRs of $15$ dB and $20$ dB particularly, i.e. $0.6$ dB at SNR=$15$ dB. At lower SNRs, we can still achieve an enhanced performance albeit with a quite small magnitude. Under Rayleigh fading channel, we achieve significant improvement across all SNRs.
Fig. \ref{CIFAR10_MSSSIM} demonstrates the MSSSIM performance for CIFAR10 dataset over Rayleigh fading channel. It can be observed that the joint CDDM and JSCC system outperforms both the JSCC system and the JPEG2000+LDPC system across all SNRs.
%Fig. \ref{DIV2K_MSSSIM_Rayleigh_SNRs} and \ref{CIFAR10_MSSSIM} exhibit the MSSSIM performance in relation to SNRs for both the DIV2K and CIFAR10 datasets under Rayleigh fading channel with channel estimation errors. Consistent with the observed phenomenon in PSNR performance, our joint CDDM and JSCC system demonstrates significant improvement across all SNRs for both datasets when $\sigma_h=0$, and maintains its robustness when the channel estimation errors are small, i.e., $\sigma=0.05$. However, as the channel estimation errors increase, i.e., $\sigma_h=0.1$, the performance gain diminishes rapidly at an SNR of $20$ dB.

%%这里加上对鲁棒性的分析，结合上面的看

Fig. \ref{DIV2K_rayleigh_MSSSIM_CBRs} demonstrates the MSSSIM performance versus CBR for DIV2K under both AWGN channel and Rayleigh fading channel respectively. The results demonstrate that our joint CDDM and JSCC system outperforms the JSCC system under all examined conditions. Analogous to the PSNR performance, the magnitude of gain decrease when the CBR is large due to the same reason.
Moreover, all the experiment results conducted with MSSSIM performance show the consistent phenomenon that the MSSSIM performance of the JPEG2000+LDPC system is remarkably poor across all experimental configurations, showcasing a substantial disparity compared to both JSCC-based systems. These phenomenons prove that when considering the HVS, the JSCC system exhibits a dominant advantage over the JPEG2000+LDPC system. Furtherly, in this scenario, our joint CDDM and JSCC system can still enhance the performance.

The experiments conducted consistently demonstrate the efficacy of our joint CDDM and JSCC system, surpassing the performance of both the JSCC system and the JPEG2000+LDPC system across a wide range of conditions. These conditions encompass various SNRs, different CBRs, diverse evaluation metrics, distinct channel types and varying image resolutions.

\section{CONCLUSION}
In this paper, we have proposed the channel denoising diffusion models to eliminate the channel nosie under Rayleigh fading channel and AWGN channel. CDDM is trained utilizing a specialized noise schedule adapted to the wireless channel, which permits effective elimination of the channel noise via a suitable sampling algorithm in the reverse sampling process. Furtherly, we derived the sufficient condition under which our CDDM can reduce the conditional entropy of the received signal and demonstrate that the well-trained model satisfies this condition for smaller samlping steps through Monte Carlo experiments. CDDM is then applied into the semantic communications system based on JSCC. Extensive experimental results on CIFAR10 and DIV2K datasets show that under both AWGN and Rayleigh fading channels, the joint CDDM and JSCC system performs much better than the JSCC system without CDDM in terms of MSE, PSNR and MSSSIM.

\bibliography{ref}

\end{document}